\def\qed{\hfill $\vrule height 2.5mm  width 2.5mm depth 0mm $}
\newtheorem{theorem}{Theorem}
\newtheorem{proposition}[theorem]{Proposition}
\newtheorem{lemma}[theorem]{Lemma}
\theoremstyle{definition}
\newtheorem{definition}[theorem]{Definition}
\newtheorem{example}[theorem]{Example}
\begin{document}
$\,$\vspace{5mm}

\begin{center}
{\sf\huge Rigged Configurations Approach For The}
\vspace{3mm}\\
{\sf\huge Spin-1/2 Isotropic Heisenberg Model}
\vspace{20mm}\\
{\textsf{\huge  Reiho Sakamoto}}
\vspace{2mm}\\
{\textsf {Department of Physics,}}
\vspace{-1mm}\\
{\textsf {Tokyo University of Science, Kagurazaka,}}
\vspace{-1mm}\\
{\textsf {Shinjuku, Tokyo, 162-8601, Japan}}
\vspace{-1mm}\\
{\textsf {reiho@rs.tus.ac.jp}}
\vspace{-1mm}\\
{\textsf {reihosan@08.alumni.u-tokyo.ac.jp}}
\vspace{50mm}
\end{center}

\begin{abstract}
\noindent
We continue the rigged configurations analysis of the solutions to the Bethe ansatz equations
for the spin-1/2 isotropic Heisenberg model.
We analyze the non self-conjugate strings of Deguchi--Giri and
the counter example for the string hypothesis discovered by Essler--Korepin--Schoutens.
In both cases clear discrete structures appear.
\end{abstract}

\pagebreak

\section{Introduction}
Bethe's famous solution \cite{Bethe} to the spin-1/2 isotropic Heisenberg model under
the periodic boundary condition is one of the prototypical examples of the quantum integrable models.
In this problem, our first goal is to find all eigenvectors and eigenvalues of the Hamiltonian
so that we can describe the complete set of solutions for the corresponding Schr\"{o}dinger equation.
Basically Bethe's method is comprised of two steps.
First we assume a specific form of the eigenstates which depend on parameters $\lambda_j$.
We call them the Bethe vectors.
Then we determine the parameters $\lambda_j$ by solving the so-called
Bethe ansatz equations (see Eq.(\ref{eq:Bethe_ansatz}) for the precise form).
Later Bethe's method was reformulated in more algebraic form by Faddeev's school
(see \cite{Faddeev,KorepinBook}).

Despite of a huge amount of works that have appeared over several decades,
there are still remaining unclear aspects about the Bethe ansatz theory.
When we try to construct the eigenvectors according to the Bethe ansatz,
we first have to solve the Bethe ansatz equations.
Here we usually assume that it is enough to consider pairwise distinct solutions to the equations.
Then one immediately notices that there are too many solutions to the Bethe ansatz equations
and only some of them correspond to non-zero Bethe vectors
(see, for example, \cite{HNS1} and \cite[Section 4.2]{KiSa14}).
We call the solutions corresponding to non-zero Bethe vectors as regular solutions
and the remaining solutions as singular solutions.

Unfortunately the set of regular solutions is too small to construct all the eigenvectors.
Therefore it is natural to find a method to construct the eigenvectors
from a subset of the singular solutions.
A particularly convenient method is to introduce certain regularization
(see Eq.(\ref{eq:regularization_of_Nepomechie})) to
the solutions to the Bethe ansatz equations \cite{AV,Beisert}.
Recently, Nepomechie--Wang \cite{NepomechieWang2013}
worked out the procedure in general and proposed a criterion
to select the subset of the singular solutions which generate the remaining eigenvectors
after the regularization (see also \cite{NepomechieWang2014} for an alternative derivation).
This proposal is checked by an extensive numerical computation \cite{HNS1}.
Also the corresponding eigenvalues are obtained in \cite{KirillovSakamoto2014b}.
See Section \ref{sec:singular_solution} and Section \ref{sec:appendix}
for a somewhat extensive review of these developments.
Therefore we now have a conjectural characterization of the solutions to the Bethe ansatz equations
which shall generate all the eigenvectors.

Despite of these developments, not much is known about the structure of
the solutions to the Bethe ansatz equations.
Let us call the union of the regular solutions and Nepomechie--Wang's
physical singular solutions as physical solutions to the Bethe ansatz equations.
A popular method for the analysis of the physical solutions is
the so-called string hypothesis \cite{Bethe,Takahashi}.
In this hypothesis, we assume that the roots of physical solutions appear in the form of strings
\begin{align}
\label{eq:string}
a+bi,\,a+(b-1)i,\,a+(b-2)i,\,\ldots,\,a-bi,\qquad
(a\in\mathbb{R},\,b\in\mathbb{Z}_{\geq 0}/2).
\end{align}
Based on this assumption, we derive (half) integers called the Bethe--Takahashi
quantum numbers (see \cite[Eq.(2.11)]{Takahashi})
which label the strings in a solution (see, for example, \cite{HC,SD} for detailed explanations).
However, Essler--Korepin--Schoutens \cite{EKS:1992} discovered a counter example to this scenario.
Namely, for the 2-down spin sector of length more than 21.86 chains, they discovered a pair of real roots
which have the same Bethe--Takahashi quantum number, thus we cannot apply
the Bethe--Takahashi quantum number for classification of them.
Therefore we think it is worthwhile to consider a possible method to provide
well-defined quantum numbers to classify strings in the solutions to the Bethe ansatz equations.

In this paper, we elaborate the rigged configurations approach for this problem \cite{KiSa14}
by analyzing much more complicated solutions.
The rigged configurations approach is based on the following observation.
We introduce an order within the set of physical solutions of the same strings type
based on the real parts of the longest strings.
Then, from the relative positions of the strings, we construct a bijection
with combinatorial objects called the rigged configurations.
We apply this method for two classes of solutions.
One is the non self-conjugate strings due to Deguchi--Giri \cite{DG}
(see Section \ref{sec:deguchi_giri}).
The non self-conjugate string is a set of roots which is not a single string,
nevertheless we cannot split it into smaller strings,
see Eq.(\ref{eq:3+1_non_self-conjugate-strings}) for example.
Based on the rigged configurations approach we provide a clear demonstration
that the non self-conjugate strings are obtained by a fusion of smaller strings
even if they look irregular.

Another solution which we analyze in the present paper is the counter example
to the string hypothesis at 2-down spin sector of length 25 chains discovered by
Essler--Korepin--Schoutens \cite{EKS:1992} (see Section \ref{sec:korepin}).
We analyze 255 real solutions to the corresponding Bethe ansatz equations
by the rigged configurations approach.
Then we naturally find two exceptional real solutions which violate the rigged configurations structure.
Next we make correspondence between the set of a combination of two exceptional real solutions
and 20 physical complex solutions and the set of the rigged configurations.
All physical solutions in this case naturally fit into the rigged configurations scheme.
As the result we can determine the quantum numbers (we call riggings)
of all physical solutions in the present case.
Our result agrees with the interpretation by \cite{EKS:1992}
except for the exceptional real solutions.

In both cases, we observe clear discrete structures within the solutions to the Bethe ansatz equations.
Underlying our philosophy is that the Bethe ansatz method for the spin-1/2
isotropic Heisenberg model is a well-defined mathematical procedure.

This paper is organized as follows.
In Section \ref{sec:bethe_ansatz}, we briefly review standard results from
the algebraic Bethe ansatz analysis of the spin-1/2 isotropic Heisenberg model.
In Section \ref{sec:singular_solution}, we review recent developments about
the theory of the singular solutions.
In Section \ref{sec:rigged_configurations} we review necessary facts about the rigged configurations.
In Section \ref{sec:deguchi_giri} we analyze the non self-conjugate strings.
In Section \ref{sec:korepin} we analyze the counter example to the string hypothesis.
In Section \ref{sec:appendix} we provide complete proofs for the assertions in
Section \ref{sec:singular_solution} according to the line described
in \cite{KirillovSakamoto2014b}.
In Section \ref{sec:update} we provide an update for our previous paper \cite{KiSa14}.

All numerical results in this work are due to Mathematica.

\section{Algebraic Bethe ansatz}
\label{sec:bethe_ansatz}
In this section, we briefly review standard facts from the algebraic Bethe ansatz analysis for
the spin-1/2 isotropic Heisenberg model with the periodic boundary condition
(see \cite{Faddeev,KorepinBook} for the details).
In the length $N$ chain case, the Hamiltonian is
\begin{align}
\mathcal{H}_N &=  \frac{J}{4} \sum_{k=1}^{N}( \sigma_{k}^{1} \sigma_{k+1}^{1}
+ \sigma_{k}^{2} \sigma_{k+1}^{2}+ \sigma_{k}^{3} \sigma_{k+1}^{3} -{\mathbb {I}}_N),\qquad
\sigma_{N+1}^{a}=\sigma_{1}^{a}
\end{align}
where $\sigma^a$ ($a=1,2,3$) are the Pauli matrices and
$\mathbb{I}_N=I^{\otimes N}$ with the $2\times 2$ identity matrix $I$.
The space of states of our systems is
$\mathfrak{H}_N = \bigotimes_{j=1}^{N} V_j$ $(V_j \simeq {\mathbb {C}}^2)$.
Then $\sigma^a_k$ acts as $\sigma^a$ on the $k$-th site $V_k$
and as the identity operator on the rest of sites.
Below we aim to construct the solutions to the Schr\"{o}dinger equation
$\mathcal{H}_N\Psi_\lambda=\mathcal{E}_\lambda\Psi_\lambda$
where $\mathcal{E}_\lambda$ is the energy eigenvalue.

A basic apparatus of the algebraic Bethe ansatz is the transfer matrices.
For this purpose, we introduce the so-called Lax operators
\begin{align}
\label{eq:def_L_of_Bethe}
L_{k}(\lambda)=
\left(\!
\begin{array}{cc}
\lambda \mathbb{I}_N+\frac{i}{2}\sigma^3_k & \frac{i}{2}\sigma^-_k\\
\frac{i}{2}\sigma^+_k & \lambda \mathbb{I}_N-\frac{i}{2}\sigma^3_k
\end{array}
\!\right),\qquad
(k=1,2,\ldots,N,\,\sigma^\pm_k=\sigma^1_k\pm i\sigma^2_k)
\end{align}
which act on the space $\mathbb{C}^2_0\otimes\mathfrak{H}_N$.
Here the first $\mathbb{C}^2_0$ is the auxiliary space $\mathbb{C}^2$ which provides
the above $2\times 2$ matrix presentation of $L_{k}(\lambda)$.
The matrix entries of $L_{k}(\lambda)$ act on $\mathfrak{H}_N$.
Then the transfer matrix is
\begin{align}
\label{def:transfer_matrix}
T_N(\lambda)=L_N(\lambda)L_{N-1}(\lambda)\cdots L_1(\lambda).
\end{align}
From the $2\times 2$ matrix presentation of $T_N(\lambda)$
\begin{align}
T_N(\lambda)=
\left(
\begin{array}{cc}
A_N(\lambda) & B_N(\lambda)\\
C_N(\lambda) & D_N(\lambda)
\end{array}
\right),
\end{align}
we define the operators $A_N(\lambda)$, $B_N(\lambda)$, $C_N(\lambda)$ and $D_N(\lambda)$
which act on the space of states $\mathfrak{H}_N$.
Let $\tau_N(\lambda)=\operatorname{tr}|_{\mathbb{C}^2_0}T_N(\lambda)
=A_N(\lambda)+D_N(\lambda)$.
Then the fundamental formula in the algebraic Bethe ansatz is
\begin{align}
\label{eq:tau_hamiltonian}
\mathcal{H}_N=
\frac{iJ}{2}\frac{d}{d\lambda}\log \tau_N(\lambda)\Bigr|_{\lambda=\frac{i}{2}}
-\frac{NJ}{2}\mathbb{I}_N.
\end{align}
Therefore, in order to obtain the eigenvectors and the eigenvalues of the Hamiltonian $\mathcal{H}_N$,
it is enough to diagonalize $\tau_N(\lambda)$.
Let us start from the spacial vector
$|0\rangle_N=\left(\!
\begin{array}{c}
1\\0
\end{array}\!
\right)^{\otimes N}\in\mathfrak{H}_N$.
Note that the vector $|0\rangle_N$ is an eigenvector of $\tau_N(\lambda)$.
On this vector, we apply $B_N(\lambda)$ as the creation operator.
We call the following vectors the Bethe vectors.
\begin{align}
\Psi_N(\lambda_1,\cdots,\lambda_\ell)
=B_N(\lambda_1)\cdots B_N(\lambda_\ell)|0\rangle_N.
\end{align}
Since the operators $B_N$ are commutative
(see (\ref{eq:Bethe_relations1})), we see that the order of the parameters
$\lambda_1,\cdots,\lambda_\ell$ is not important.
It is known that $\ell$ is the number of down spins in the state $\Psi_N$.
Then the main construction is as follows.

\begin{theorem}
The Bethe vector $\Psi_N(\lambda_1,\cdots,\lambda_\ell)$
is an eigenvector of the transfer matrix $\tau_N(\lambda)$
if and only if the numbers $\lambda_1,\cdots,\lambda_\ell$
satisfy the following system of algebraic equations
\begin{align}
\label{eq:Bethe_ansatz}
\left(
\frac{\lambda_k+\frac{i}{2}}{\lambda_k-\frac{i}{2}}
\right)^N
=\prod_{j=1 \atop j\neq k}^\ell
\frac{\lambda_k-\lambda_j+i}{\lambda_k-\lambda_j-i},
\qquad
(k=1,\cdots,\ell).
\end{align}
These equations are called the Bethe ansatz equations.
\qed
\end{theorem}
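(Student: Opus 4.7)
The plan is the standard algebraic Bethe ansatz computation based on the commutation relations that follow from the Yang--Baxter ($RTT$) identity for $T_N(\lambda)$ with $R$-matrix $R(\lambda)=\lambda\,\mathbb{I}+iP$. First I would record the three identities I will actually need: the commutativity $[B(\lambda),B(\mu)]=0$, together with
\begin{align*}
A(\lambda)B(\mu) &= \frac{\mu-\lambda+i}{\mu-\lambda}\,B(\mu)A(\lambda) - \frac{i}{\mu-\lambda}\,B(\lambda)A(\mu),\\
D(\lambda)B(\mu) &= \frac{\lambda-\mu+i}{\lambda-\mu}\,B(\mu)D(\lambda) - \frac{i}{\lambda-\mu}\,B(\lambda)D(\mu).
\end{align*}
Next I would establish the pseudo-vacuum action: because each $L_k(\lambda)|0\rangle_k$ is upper-triangular in the auxiliary space $\mathbb{C}^2_0$, an induction on $N$ yields $A(\lambda)|0\rangle_N=(\lambda+\tfrac{i}{2})^N|0\rangle_N$, $D(\lambda)|0\rangle_N=(\lambda-\tfrac{i}{2})^N|0\rangle_N$, and $C(\lambda)|0\rangle_N=0$.

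With these ingredients the plan is to evaluate $\tau_N(\lambda)\Psi_N(\lambda_1,\ldots,\lambda_\ell) = \bigl(A(\lambda)+D(\lambda)\bigr)B(\lambda_1)\cdots B(\lambda_\ell)|0\rangle_N$ by iteratively commuting $A(\lambda)$ and $D(\lambda)$ rightward past each $B(\lambda_j)$. Each commutation produces a ``direct'' piece (in which $\lambda$ stays attached to $A$ or $D$) and an ``exchange'' piece (in which $\lambda$ swaps with $\lambda_j$). Using $[B(\lambda),B(\mu)]=0$ to symmetrize over the $\lambda_j$, only the exchanges in which $\lambda$ replaces a single $\lambda_k$ survive after collecting. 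The end result takes the form
\[
\tau_N(\lambda)\Psi_N(\lambda_1,\ldots,\lambda_\ell)
=\Lambda_N(\lambda)\,\Psi_N(\lambda_1,\ldots,\lambda_\ell)
+\sum_{k=1}^\ell M_k\,\Psi_N(\lambda,\lambda_1,\ldots,\widehat{\lambda_k},\ldots,\lambda_\ell),
\]
where
\[
\Lambda_N(\lambda)=(\lambda+\tfrac{i}{2})^N\prod_{j=1}^\ell\frac{\lambda-\lambda_j-i}{\lambda-\lambda_j}+(\lambda-\tfrac{i}{2})^N\prod_{j=1}^\ell\frac{\lambda-\lambda_j+i}{\lambda-\lambda_j},
\]
and each $M_k$ is proportional to
\[
(\lambda_k+\tfrac{i}{2})^N\prod_{j\neq k}(\lambda_k-\lambda_j-i)-(\lambda_k-\tfrac{i}{2})^N\prod_{j\neq k}(\lambda_k-\lambda_j+i),
\]
whose vanishing is precisely the $k$-th equation of (\ref{eq:Bethe_ansatz}).

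This already gives the ``if'' direction with eigenvalue $\Lambda_N(\lambda)$. For the ``only if'' direction I would argue that the shifted Bethe vectors appearing in the sum are linearly independent of $\Psi_N(\lambda_1,\ldots,\lambda_\ell)$ as meromorphic functions of the auxiliary parameter $\lambda$, so that each $M_k$ must vanish individually. The cleanest route is a residue argument: $\Lambda_N(\lambda)$ is regular at $\lambda=\lambda_k$ (the apparent pole cancels), whereas the coefficient of the $k$-th unwanted vector carries a genuine simple pole at $\lambda=\lambda_k$ through the factor $(\lambda-\lambda_k)^{-1}$. Taking $\operatorname{Res}_{\lambda=\lambda_k}$ of the would-be eigenvalue equation $\tau_N(\lambda)\Psi_N=\Lambda_N(\lambda)\,\Psi_N$ therefore isolates $M_k$ and forces it to zero. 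This residue extraction is the main technical obstacle, since it requires the $\lambda_j$ to be pairwise distinct and the resulting Bethe vector to be nonzero; both conditions fail exactly in the singular-solution regime that motivates the regularization and rigged configurations analysis of the later sections.
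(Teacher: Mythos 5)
Your ``if'' direction is the standard algebraic Bethe ansatz computation, and it matches exactly the machinery the paper records: the commutation relations (\ref{eq:Bethe_relations1})--(\ref{eq:Bethe_relations3}), the vacuum eigenvalues of $A_N$ and $D_N$, and the expansion of $\{A_N(\lambda)+D_N(\lambda)\}B_N(\lambda_1)\cdots B_N(\lambda_\ell)|0\rangle_N$ into a wanted term with coefficient $\Lambda$ and unwanted terms with coefficients $\Lambda_k$, all of which appear in Section~\ref{sec:appendix}. (The paper itself states the theorem without proof, citing \cite{Faddeev,KorepinBook}, so this appendix machinery is the only internal point of comparison.) Your $\Lambda_N$ and $M_k$ agree with the paper's $\Lambda$ and $\Lambda_k$ after the rewriting $\frac{\lambda_j-\lambda-i}{\lambda_j-\lambda}=\frac{\lambda-\lambda_j+i}{\lambda-\lambda_j}$. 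This half is fine.

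The gap is in the ``only if'' direction: the residue extraction you propose is vacuous. First, $\Lambda_N(\lambda)$ is \emph{not} automatically regular at $\lambda=\lambda_k$; a direct computation gives $\operatorname{Res}_{\lambda=\lambda_k}\Lambda_N=-i\bigl[(\lambda_k+\tfrac{i}{2})^N\prod_{j\neq k}\tfrac{\lambda_k-\lambda_j-i}{\lambda_k-\lambda_j}-(\lambda_k-\tfrac{i}{2})^N\prod_{j\neq k}\tfrac{\lambda_j-\lambda_k-i}{\lambda_j-\lambda_k}\bigr]$, which is $-i$ times precisely the bracket whose vanishing is the $k$-th Bethe equation. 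So ``the apparent pole cancels'' is the very statement to be proved, not a fact you may invoke. Second, taking the residue of the full expansion at $\lambda=\lambda_k$ yields no information anyway: the $k$-th unwanted vector $B_N(\lambda)\prod_{j\neq k}B_N(\lambda_j)|0\rangle_N$ degenerates to $\Psi_N(\lambda_1,\ldots,\lambda_\ell)$ itself at $\lambda=\lambda_k$ (by commutativity of the $B_N$'s), and since $\operatorname{Res}_{\lambda=\lambda_k}\Lambda_k=+i\,[\text{same bracket}]$, the two residues cancel identically whether or not the Bethe equations hold. Thus the polynomiality of $\tau_N(\lambda)\Psi_N$ in $\lambda$ is consistent with the expansion for \emph{arbitrary} pairwise distinct $\lambda_j$, and the residue at $\lambda=\lambda_k$ isolates nothing. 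To close the ``only if'' direction you must return to the linear-independence claim you mention in passing --- that for generic $\lambda$ the vectors $\Psi_N(\lambda_1,\ldots,\lambda_\ell)$ and $B_N(\lambda)\prod_{j\neq k}B_N(\lambda_j)|0\rangle_N$, $k=1,\ldots,\ell$, are linearly independent --- and actually establish it; that is the real technical content of this direction and it cannot be replaced by the pole-counting shortcut.
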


By a direct computation, we see that non-zero Bethe vector
$\Psi_N(\lambda_1,\cdots,\lambda_\ell)$ is the eigenvector of the Hamiltonian $\mathcal{H}_N$
with the energy eigenvalue $\mathcal{E}=-\frac{J}{2}\sum_{j=1}^\ell
\frac{1}{\lambda_j^2+\frac{1}{4}}$.
If a pairwise distinct solution $(\lambda_1,\cdots,\lambda_\ell)$ to the Bethe ansatz equations
corresponds to a non-zero Bethe vector, we call it a {\bf regular solution}.

Note that we can introduce a natural action of the Lie algebra $\mathfrak{sl}_2$
on the space $\mathfrak{H}_N$.
Then it is known that the Bethe vectors thus obtained correspond to the highest weight vectors.
The other eigenvectors of the Hamiltonian are obtained by the lowering operators of $\mathfrak{sl}_2$.
Nevertheless, it is known for a long time that the regular solutions to the Bethe ansatz equations
do not cover all the vectors of $\mathfrak{H}_N$.
This will be the subject of the next section.

\section{Singular solutions to the Bethe ansatz equations}
\label{sec:singular_solution}
As we see in the previous section, if $\{\lambda_1,\ldots,\lambda_\ell\}$ is a solution to
the Bethe ansatz equations, we have
\begin{align}
\mathcal{H}_N\Psi_N(\lambda_1,\ldots,\lambda_\ell)=
\mathcal{E}_{\lambda_1,\ldots,\lambda_\ell}
\Psi_N(\lambda_1,\ldots,\lambda_\ell),\qquad
\mathcal{E}_{\lambda_1,\ldots,\lambda_\ell}:=-\frac{J}{2}\sum_{j=1}^\ell\frac{1}{\lambda_j^2+\frac{1}{4}}.
\end{align}
Therefore the following type of the solutions correspond to divergent energy eigenvalues.
\begin{definition}
The pairwise distinct solutions to the Bethe ansatz equations of the form
\begin{align}
\label{eq:general_singular_solution}
\left\{
\frac{i}{2},-\frac{i}{2},\lambda_3,\ldots,\lambda_\ell
\right\}.
\end{align}
are called {\bf singular}.
\qed
\end{definition}
The Bethe vectors corresponding to singular solutions are the null vector.
However, in the following arguments based on the rigged configurations,
it is necessarily to analyze all physically relevant solutions to the Bethe ansatz equations.
Therefore we need to introduce a certain regularization scheme corresponding to the singular solutions.
We remark that this problem was already noticed in Bethe's original paper \cite{Bethe}.

Following Nepomechie--Wang \cite{NepomechieWang2013},
let us consider the following regularization
\begin{align}
\label{eq:regularization_of_Nepomechie}
\lambda_1=\frac{i}{2}+\epsilon+c\,\epsilon^N,\qquad
\lambda_2=-\frac{i}{2}+\epsilon.
\end{align}
As we shall see in the sequel, this regularization leads to finite Bethe vectors.
Moreover, Kirillov--Sakamoto \cite{KirillovSakamoto2014b} derived
the corresponding energy eigenvalues under the regularization.
Note that this type of the regularization was also considered by Avdeev--Vladimirov \cite{AV} in 1987
and Beisert et. al. \cite[equation (3.4)]{Beisert} in 2003.
In the case of $\ell=2$, the resulting eigenvector coincides with
the explicit expression of \cite[equation (26)]{EKS:1992}.
See \cite{GD} where the authors derived several explicit expressions for the singular eigenvectors,
including $\ell=2$ case, under the same regularization considered here
based on the result of \cite{Deguchi2001}.

In general, our regularization scheme proceeds as follows.
We start from the general singular solution (\ref{eq:general_singular_solution}).
If a singular solution satisfies the following criterion due to Nepomechie--Wang
\begin{align}
\label{eq:NepomechieWangCriterion}
\left(
-\prod_{j=3}^\ell
\frac{\lambda_j+\frac{i}{2}}{\lambda_j-\frac{i}{2}}
\right)^N=1,
\end{align}
we call {\bf physical singular solution}.
We set the number $c$ of (\ref{eq:regularization_of_Nepomechie}) as follows
\begin{align}
c=2i^{N+1}\prod_{j=3}^\ell
\frac{\lambda_j+\frac{3i}{2}}{\lambda_j-\frac{i}{2}}.
\end{align}
According to \cite{NepomechieWang2013},
\begin{align}
\lim_{\epsilon\rightarrow 0}\frac{1}{\epsilon^N}
B_N\!\left(\frac{i}{2}+\epsilon+c\,\epsilon^N\right)B_N\!\left(-\frac{i}{2}+\epsilon\right)
B_N(\lambda_3)\cdots B_N(\lambda_\ell)
|0\rangle_N
\end{align}
gives a well-defined eigenvector of the Hamiltonian with the eigenvalue \cite{KirillovSakamoto2014b}
\begin{align}
\mathcal{E}_{\frac{i}{2},-\frac{i}{2},\lambda_3,\ldots,\lambda_\ell}
=-J-\frac{J}{2}\sum_{j=3}^\ell\frac{1}{\lambda_j^2+\frac{1}{4}}.
\end{align}

Then the main conjecture of \cite{NepomechieWang2013} states that the Bethe vectors
corresponding to regular solutions and physical singular solutions provide
the complete set of the highest weight vectors of the model.
Hao--Nepomechie--Sommese \cite{HNS1} computed all pairwise distinct solutions
to the Bethe ansatz equations up to $N=14$ and confirmed this conjecture.
Therefore we believe this conjecture is very likely to be true.

\begin{example}
When $N=4$, we have the singular solution $\{i/2,-i/2\}$.
By an explicit computation, we obtain
\begin{align}
\label{eq:NepomechieWang_N=4}
\lim_{\epsilon\rightarrow 0}\frac{1}{\epsilon^4}
B_4\!\left(\frac{i}{2}+\epsilon+c\,\epsilon^4\right)B_4\!\left(-\frac{i}{2}+\epsilon\right)|0\rangle_4
=(0,0,0,2,0,0,-2,0,0,ic,0,0,2,0,0,0)^t.
\end{align}
If we put $c=2i$, we obtain the correct eigenvector with the energy eigenvalue $-J$.
\qed
\end{example}
Note that the above example forms the initial step of the inductive proof for the assertions
in the present section which is given in the appendix at the end of the present paper.

\section{Rigged configurations}
\label{sec:rigged_configurations}
Rigged configurations are combinatorial objects introduced by \cite{KKR,KR}.
In \cite{KiSa14}, it was pointed out that the set of regular and physical singular solutions
to the Bethe ansatz equations have natural correspondence with the rigged configurations.
In the next sections we aim to elaborate this claim by analyzing more complicated solutions.
For this purpose we review basic definitions about the rigged configurations following \cite{KiSa14}.

The rigged configurations are comprised of set of data $\mu$ and $(\nu,J)$.
$\mu$ is a sequence of positive integers which specifies the shape of the space of states.
If we consider the spin $s$-Heisenberg model, the space of states is $(\mathbb{C}^{2s+1})^{\otimes N}$.
Then we set $\mu=(\overbrace{2s,\ldots,2s}^N)$.
As for $(\nu,J)$, $\nu$ is a partition (Young diagram) and $J$ is a sequence of integers associated with
each row of the diagram $\nu$ which we impose certain defining conditions.
We regard $(\nu,J)$ as a multiset $(\nu,J)=\{(\nu_1,J_1),\ldots,(\nu_g,J_g)\}$
where $g$ is the length of the partition $\nu$.
In the definition of the rigged configurations we do not make distinction about the order within
the multiset $(\nu,J)$.
For $k\in\mathbb{Z}_{\geq 0}$, define the {\bf vacancy numbers} $P_k(\nu)$ by
\begin{align}
P_k(\nu)=\sum_{j=1}^N\min(k,\mu_j)-2\sum_{j=1}^g\min(k,\nu_j).
\end{align}
Note that although we do not explicitly write $\mu$ in $P_k(\nu)$,
the vacancy numbers do depend on $\mu$.
The condition that the data $\mu$ and $(\nu,J)$ to be a rigged configuration is as follows:
\begin{align}
0\leq J_i\leq P_{\nu_i}(\nu),\qquad (0\leq i\leq g).
\end{align}
This condition implicitly requires $0\leq P_{\nu_i}(\nu)$ as the condition on $\nu$,
in which case the partition $\nu$ is called admissible.
We call $\nu$ the configuration and $J_i$ $(1\leq i\leq g)$ the rigging.

In the spin-1/2 case, the situation becomes simple.
In this case, we have $\mu=(1^N)$, thus the vacancy number is
\begin{align}
P_k(\nu)=N-2\sum_{j=1}^g\min(k,\nu_j).
\end{align}
Here $\sum_{j=1}^g\min(k,\nu_j)$ is the number of the boxes within the first $k$ columns of
the Young diagram $\nu$.
Then the partition $\nu$ is admissible when the number of the boxes $|\nu|$ satisfies $|\nu|\leq N/2$.
In the application to the Bethe ansatz, $|\nu|$ is the number of down spins $\ell$.
For example, when $N=14$, $\ell=7$,
$(\nu,J)=\{(3,0),(2,1),(1,4),(1,2)\}$ is a rigged configuration.
We use the following diagram to represent this rigged configuration:
\begin{center}
\unitlength 12pt
\begin{picture}(5,4)
\put(0,3.1){0}
\put(0,2.1){2}
\put(0,1.1){6}
\put(0,0.1){6}
\put(0.8,0){\Yboxdim12pt\yng(3,2,1,1)}
\put(4.1,3.1){0}
\put(3.1,2.1){1}
\put(2.1,1.1){4}
\put(2.1,0.1){2}
\end{picture}
\end{center}
Here we depict $\nu$ as the Young diagram and we put the riggings (resp. vacancy numbers)
on the right (resp. left) of the corresponding rows $\nu_i$.

For higher spin cases the situation becomes more subtle.
For spin-1 case, we have $\mu=(2^N)$.
Let $N=2$. Then $\nu=(1)$ and $(2)$ are admissible, though $\nu=(1,1)$ is not admissible
since we have $P_1((1,1))=\{\min(1,1)+\min(1,1)\}-2\{\min(1,1)+\min(1,1)\}=-2$.

Basic idea of the correspondence between the solutions to the Bethe ansatz equations
and the rigged configurations is as follows.
In many cases, roots of the solutions to the Bethe ansatz equations
take the following forms:
\[
a+bi,\,a+(b-1)i,\,a+(b-2)i,\,\ldots,\,a-bi,\qquad
(a\in\mathbb{R},\,b\in\mathbb{Z}_{\geq 0}/2).
\]
We call this as $(2b+1)$-string.
Then we correspond a length $(2b+1)$ row of $\nu$.
The riggings specify the relative positions of the $(2b+1)$-strings.
Below we explain how to realize this correspondence based on concrete examples.

\section{Non self-conjugate strings}
\label{sec:deguchi_giri}
Deguchi and Giri \cite{DG} reported a new type of solutions which they call non self-conjugate strings.
This class of solutions first appears when $N=12$ and its typical structure looks as follows:
\begin{align}\label{eq:3+1_non_self-conjugate-strings}
\{a+i,a+\delta_1+i\delta_2,a+\delta_1-i\delta_2,a-i\}
\end{align}
where $a\in\mathbb{R}$ and the two real numbers $\delta_1$ and $\delta_2$
are supposed to be small.
A complicated feature of this type of solutions is that it is difficult to split the string of
solutions into distinct strings of solutions.
In this section, we use the rigged configurations to analyze the
non self-conjugate strings.\footnote{For the numerical computations of the case $N=12$ and $\ell=6$,
we greatly benefited from the supplementary tables of \cite{HNS1}.}

Following \cite{KiSa14} we construct a bijection between the rigged configurations
and the solutions to the Bethe ansatz equations for $N=12$ and $\ell=6$ case.
Here we concentrate on the physical solutions which contain three strings of lengths 3, 2 and 1.
This case corresponds to the rigged configurations of the following type:
\begin{center}
\unitlength 12pt
\begin{picture}(3,3)
\put(0,2.1){0}
\put(0,1.1){2}
\put(0,0.1){6}
\put(0.8,0){\Yboxdim12pt\yng(3,2,1)}
\put(4,2.1){0}
\put(3,1.1){$r_1$}
\put(2,0.1){$r_2$}
\end{picture}
\end{center}
For the sake of the brevity we call the above rigged configurations by $(r_1,r_2)$.
Since $0\leq r_1\leq 2$ and $0\leq r_2\leq 6$ we have 21 such rigged configurations.
Now we introduce the order for 21 solutions with length 3, 2 and 1 strings
according to the real parts of the 3-strings.
Then we have a clear classification as follows (asterisk for the solutions number
means non self-conjugate strings).
We remark that the solution $\#11^*$ is a physical singular solution.
In the following diagrams, the solutions are depicted on the complex plane
and the spacing between the dotted lines is $0.5$.

\paragraph{Group 1.}
\begin{center}
\unitlength 12pt
\begin{picture}(7,6)
\put(0.0,5.2){$1$}
\put(4.084,3.){\circle*{0.3}}
\put(4.088,4.992){\circle*{0.3}}
\put(2.084,4.){\circle*{0.3}}
\put(1.57,3.){\circle*{0.3}}
\put(2.084,2.){\circle*{0.3}}
\put(4.088,1.008){\circle*{0.3}}
\put(0,3){\vector(1,0){6}}
\put(3,0){\vector(0,1){6}}
\multiput(1,0.21)(0,0.2){29}{\circle*{0.07}}
\multiput(2,0.21)(0,0.2){29}{\circle*{0.07}}
\multiput(4,0.21)(0,0.2){29}{\circle*{0.07}}
\multiput(5,0.21)(0,0.2){29}{\circle*{0.07}}
\multiput(0.2,1.01)(0.2,0){29}{\circle*{0.07}}
\multiput(0.2,2.01)(0.2,0){29}{\circle*{0.07}}
\multiput(0.2,4.01)(0.2,0){29}{\circle*{0.07}}
\multiput(0.2,5.01)(0.2,0){29}{\circle*{0.07}}
\end{picture}
\begin{picture}(7,6)
\put(0.0,5.2){$2$}
\put(4.054,3.){\circle*{0.3}}
\put(4.058,4.992){\circle*{0.3}}
\put(1.718,4.006){\circle*{0.3}}
\put(2.394,3.){\circle*{0.3}}
\put(1.718,1.994){\circle*{0.3}}
\put(4.058,1.008){\circle*{0.3}}
\put(0,3){\vector(1,0){6}}
\put(3,0){\vector(0,1){6}}
\multiput(1,0.21)(0,0.2){29}{\circle*{0.07}}
\multiput(2,0.21)(0,0.2){29}{\circle*{0.07}}
\multiput(4,0.21)(0,0.2){29}{\circle*{0.07}}
\multiput(5,0.21)(0,0.2){29}{\circle*{0.07}}
\multiput(0.2,1.01)(0.2,0){29}{\circle*{0.07}}
\multiput(0.2,2.01)(0.2,0){29}{\circle*{0.07}}
\multiput(0.2,4.01)(0.2,0){29}{\circle*{0.07}}
\multiput(0.2,5.01)(0.2,0){29}{\circle*{0.07}}
\end{picture}
\begin{picture}(7,6)
\put(0.0,5.2){$3$}
\put(3.996,3.){\circle*{0.3}}
\put(4.004,4.994){\circle*{0.3}}
\put(2.766,3.){\circle*{0.3}}
\put(1.616,4.01){\circle*{0.3}}
\put(1.616,1.99){\circle*{0.3}}
\put(4.004,1.006){\circle*{0.3}}
\put(0,3){\vector(1,0){6}}
\put(3,0){\vector(0,1){6}}
\multiput(1,0.21)(0,0.2){29}{\circle*{0.07}}
\multiput(2,0.21)(0,0.2){29}{\circle*{0.07}}
\multiput(4,0.21)(0,0.2){29}{\circle*{0.07}}
\multiput(5,0.21)(0,0.2){29}{\circle*{0.07}}
\multiput(0.2,1.01)(0.2,0){29}{\circle*{0.07}}
\multiput(0.2,2.01)(0.2,0){29}{\circle*{0.07}}
\multiput(0.2,4.01)(0.2,0){29}{\circle*{0.07}}
\multiput(0.2,5.01)(0.2,0){29}{\circle*{0.07}}
\end{picture}
\begin{picture}(7,6)
\put(0.0,5.2){$4$}
\put(3.93,3.){\circle*{0.3}}
\put(3.938,4.996){\circle*{0.3}}
\put(1.562,4.012){\circle*{0.3}}
\put(1.562,1.988){\circle*{0.3}}
\put(3.07,3.){\circle*{0.3}}
\put(3.938,1.004){\circle*{0.3}}
\put(0,3){\vector(1,0){6}}
\put(3,0){\vector(0,1){6}}
\multiput(1,0.21)(0,0.2){29}{\circle*{0.07}}
\multiput(2,0.21)(0,0.2){29}{\circle*{0.07}}
\multiput(4,0.21)(0,0.2){29}{\circle*{0.07}}
\multiput(5,0.21)(0,0.2){29}{\circle*{0.07}}
\multiput(0.2,1.01)(0.2,0){29}{\circle*{0.07}}
\multiput(0.2,2.01)(0.2,0){29}{\circle*{0.07}}
\multiput(0.2,4.01)(0.2,0){29}{\circle*{0.07}}
\multiput(0.2,5.01)(0.2,0){29}{\circle*{0.07}}
\end{picture}
\end{center}

\begin{center}
\unitlength 12pt
\begin{picture}(7,6)
\put(0.0,5.2){$5$}
\put(3.848,3.){\circle*{0.3}}
\put(3.858,4.998){\circle*{0.3}}
\put(3.386,3.){\circle*{0.3}}
\put(1.524,4.012){\circle*{0.3}}
\put(1.524,1.988){\circle*{0.3}}
\put(3.858,1.002){\circle*{0.3}}
\put(0,3){\vector(1,0){6}}
\put(3,0){\vector(0,1){6}}
\multiput(1,0.21)(0,0.2){29}{\circle*{0.07}}
\multiput(2,0.21)(0,0.2){29}{\circle*{0.07}}
\multiput(4,0.21)(0,0.2){29}{\circle*{0.07}}
\multiput(5,0.21)(0,0.2){29}{\circle*{0.07}}
\multiput(0.2,1.01)(0.2,0){29}{\circle*{0.07}}
\multiput(0.2,2.01)(0.2,0){29}{\circle*{0.07}}
\multiput(0.2,4.01)(0.2,0){29}{\circle*{0.07}}
\multiput(0.2,5.01)(0.2,0){29}{\circle*{0.07}}
\end{picture}
\begin{picture}(7,6)
\put(0.0,5.2){$6^*$}
\put(3.768,3.038){\circle*{0.3}}
\put(3.734,4.982){\circle*{0.3}}
\put(1.496,4.014){\circle*{0.3}}
\put(1.496,1.986){\circle*{0.3}}
\put(3.768,2.962){\circle*{0.3}}
\put(3.734,1.018){\circle*{0.3}}
\put(0,3){\vector(1,0){6}}
\put(3,0){\vector(0,1){6}}
\multiput(1,0.21)(0,0.2){29}{\circle*{0.07}}
\multiput(2,0.21)(0,0.2){29}{\circle*{0.07}}
\multiput(4,0.21)(0,0.2){29}{\circle*{0.07}}
\multiput(5,0.21)(0,0.2){29}{\circle*{0.07}}
\multiput(0.2,1.01)(0.2,0){29}{\circle*{0.07}}
\multiput(0.2,2.01)(0.2,0){29}{\circle*{0.07}}
\multiput(0.2,4.01)(0.2,0){29}{\circle*{0.07}}
\multiput(0.2,5.01)(0.2,0){29}{\circle*{0.07}}
\end{picture}
\begin{picture}(7,6)
\put(0.0,5.2){$7$}
\put(4.656,3.){\circle*{0.3}}
\put(3.46,3.){\circle*{0.3}}
\put(3.46,4.998){\circle*{0.3}}
\put(1.48,4.014){\circle*{0.3}}
\put(1.48,1.986){\circle*{0.3}}
\put(3.46,1.002){\circle*{0.3}}
\put(0,3){\vector(1,0){6}}
\put(3,0){\vector(0,1){6}}
\multiput(1,0.21)(0,0.2){29}{\circle*{0.07}}
\multiput(2,0.21)(0,0.2){29}{\circle*{0.07}}
\multiput(4,0.21)(0,0.2){29}{\circle*{0.07}}
\multiput(5,0.21)(0,0.2){29}{\circle*{0.07}}
\multiput(0.2,1.01)(0.2,0){29}{\circle*{0.07}}
\multiput(0.2,2.01)(0.2,0){29}{\circle*{0.07}}
\multiput(0.2,4.01)(0.2,0){29}{\circle*{0.07}}
\multiput(0.2,5.01)(0.2,0){29}{\circle*{0.07}}
\end{picture}
\end{center}

\paragraph{Group 2.}
\begin{center}
\unitlength 12pt
\begin{picture}(7,6)
\put(0.0,5.2){$8$}
\put(3.412,3.){\circle*{0.3}}
\put(3.21,3.998){\circle*{0.3}}
\put(3.41,5.){\circle*{0.3}}
\put(1.34,3.){\circle*{0.3}}
\put(3.21,2.002){\circle*{0.3}}
\put(3.41,1.){\circle*{0.3}}
\put(0,3){\vector(1,0){6}}
\put(3,0){\vector(0,1){6}}
\multiput(1,0.21)(0,0.2){29}{\circle*{0.07}}
\multiput(2,0.21)(0,0.2){29}{\circle*{0.07}}
\multiput(4,0.21)(0,0.2){29}{\circle*{0.07}}
\multiput(5,0.21)(0,0.2){29}{\circle*{0.07}}
\multiput(0.2,1.01)(0.2,0){29}{\circle*{0.07}}
\multiput(0.2,2.01)(0.2,0){29}{\circle*{0.07}}
\multiput(0.2,4.01)(0.2,0){29}{\circle*{0.07}}
\multiput(0.2,5.01)(0.2,0){29}{\circle*{0.07}}
\end{picture}
\begin{picture}(7,6)
\put(0.0,5.2){$9$}
\put(3.216,3.998){\circle*{0.3}}
\put(3.12,4.998){\circle*{0.3}}
\put(2.208,3.){\circle*{0.3}}
\put(3.118,3.){\circle*{0.3}}
\put(3.12,1.002){\circle*{0.3}}
\put(3.216,2.002){\circle*{0.3}}
\put(0,3){\vector(1,0){6}}
\put(3,0){\vector(0,1){6}}
\multiput(1,0.21)(0,0.2){29}{\circle*{0.07}}
\multiput(2,0.21)(0,0.2){29}{\circle*{0.07}}
\multiput(4,0.21)(0,0.2){29}{\circle*{0.07}}
\multiput(5,0.21)(0,0.2){29}{\circle*{0.07}}
\multiput(0.2,1.01)(0.2,0){29}{\circle*{0.07}}
\multiput(0.2,2.01)(0.2,0){29}{\circle*{0.07}}
\multiput(0.2,4.01)(0.2,0){29}{\circle*{0.07}}
\multiput(0.2,5.01)(0.2,0){29}{\circle*{0.07}}
\end{picture}
\begin{picture}(7,6)
\put(0.0,5.2){$10$}
\put(3.138,3.998){\circle*{0.3}}
\put(3.024,4.998){\circle*{0.3}}
\put(2.652,3.){\circle*{0.3}}
\put(3.02,3.){\circle*{0.3}}
\put(3.024,1.002){\circle*{0.3}}
\put(3.138,2.002){\circle*{0.3}}
\put(0,3){\vector(1,0){6}}
\put(3,0){\vector(0,1){6}}
\multiput(1,0.21)(0,0.2){29}{\circle*{0.07}}
\multiput(2,0.21)(0,0.2){29}{\circle*{0.07}}
\multiput(4,0.21)(0,0.2){29}{\circle*{0.07}}
\multiput(5,0.21)(0,0.2){29}{\circle*{0.07}}
\multiput(0.2,1.01)(0.2,0){29}{\circle*{0.07}}
\multiput(0.2,2.01)(0.2,0){29}{\circle*{0.07}}
\multiput(0.2,4.01)(0.2,0){29}{\circle*{0.07}}
\multiput(0.2,5.01)(0.2,0){29}{\circle*{0.07}}
\end{picture}
\begin{picture}(7,6)
\put(0.0,5.2){$11^*$}
\put(3.,2.){\circle*{0.3}}
\put(3.,4.){\circle*{0.3}}
\put(3.,3.036){\circle*{0.3}}
\put(3.,4.986){\circle*{0.3}}
\put(3.,2.964){\circle*{0.3}}
\put(3.,1.014){\circle*{0.3}}
\put(0,3){\vector(1,0){6}}
\put(3,0){\vector(0,1){6}}
\multiput(1,0.21)(0,0.2){29}{\circle*{0.07}}
\multiput(2,0.21)(0,0.2){29}{\circle*{0.07}}
\multiput(4,0.21)(0,0.2){29}{\circle*{0.07}}
\multiput(5,0.21)(0,0.2){29}{\circle*{0.07}}
\multiput(0.2,1.01)(0.2,0){29}{\circle*{0.07}}
\multiput(0.2,2.01)(0.2,0){29}{\circle*{0.07}}
\multiput(0.2,4.01)(0.2,0){29}{\circle*{0.07}}
\multiput(0.2,5.01)(0.2,0){29}{\circle*{0.07}}
\end{picture}
\end{center}

\begin{center}
\unitlength 12pt
\begin{picture}(7,6)
\put(0.0,5.2){$12$}
\put(2.98,3.){\circle*{0.3}}
\put(2.976,4.998){\circle*{0.3}}
\put(2.862,3.998){\circle*{0.3}}
\put(2.862,2.002){\circle*{0.3}}
\put(2.976,1.002){\circle*{0.3}}
\put(3.348,3.){\circle*{0.3}}
\put(0,3){\vector(1,0){6}}
\put(3,0){\vector(0,1){6}}
\multiput(1,0.21)(0,0.2){29}{\circle*{0.07}}
\multiput(2,0.21)(0,0.2){29}{\circle*{0.07}}
\multiput(4,0.21)(0,0.2){29}{\circle*{0.07}}
\multiput(5,0.21)(0,0.2){29}{\circle*{0.07}}
\multiput(0.2,1.01)(0.2,0){29}{\circle*{0.07}}
\multiput(0.2,2.01)(0.2,0){29}{\circle*{0.07}}
\multiput(0.2,4.01)(0.2,0){29}{\circle*{0.07}}
\multiput(0.2,5.01)(0.2,0){29}{\circle*{0.07}}
\end{picture}
\begin{picture}(7,6)
\put(0.0,5.2){$13$}
\put(2.882,3.){\circle*{0.3}}
\put(2.88,4.998){\circle*{0.3}}
\put(2.784,3.998){\circle*{0.3}}
\put(2.784,2.002){\circle*{0.3}}
\put(2.88,1.002){\circle*{0.3}}
\put(3.792,3.){\circle*{0.3}}
\put(0,3){\vector(1,0){6}}
\put(3,0){\vector(0,1){6}}
\multiput(1,0.21)(0,0.2){29}{\circle*{0.07}}
\multiput(2,0.21)(0,0.2){29}{\circle*{0.07}}
\multiput(4,0.21)(0,0.2){29}{\circle*{0.07}}
\multiput(5,0.21)(0,0.2){29}{\circle*{0.07}}
\multiput(0.2,1.01)(0.2,0){29}{\circle*{0.07}}
\multiput(0.2,2.01)(0.2,0){29}{\circle*{0.07}}
\multiput(0.2,4.01)(0.2,0){29}{\circle*{0.07}}
\multiput(0.2,5.01)(0.2,0){29}{\circle*{0.07}}
\end{picture}
\begin{picture}(7,6)
\put(0.0,5.2){$14$}
\put(4.66,3.){\circle*{0.3}}
\put(2.79,3.998){\circle*{0.3}}
\put(2.59,5.){\circle*{0.3}}
\put(2.588,3.){\circle*{0.3}}
\put(2.79,2.002){\circle*{0.3}}
\put(2.59,1.){\circle*{0.3}}
\put(0,3){\vector(1,0){6}}
\put(3,0){\vector(0,1){6}}
\multiput(1,0.21)(0,0.2){29}{\circle*{0.07}}
\multiput(2,0.21)(0,0.2){29}{\circle*{0.07}}
\multiput(4,0.21)(0,0.2){29}{\circle*{0.07}}
\multiput(5,0.21)(0,0.2){29}{\circle*{0.07}}
\multiput(0.2,1.01)(0.2,0){29}{\circle*{0.07}}
\multiput(0.2,2.01)(0.2,0){29}{\circle*{0.07}}
\multiput(0.2,4.01)(0.2,0){29}{\circle*{0.07}}
\multiput(0.2,5.01)(0.2,0){29}{\circle*{0.07}}
\end{picture}
\end{center}

\paragraph{Group 3.}
\begin{center}
\unitlength 12pt
\begin{picture}(7,6)
\put(0.0,5.2){$15$}
\put(4.52,1.986){\circle*{0.3}}
\put(4.52,4.014){\circle*{0.3}}
\put(2.54,4.998){\circle*{0.3}}
\put(1.344,3.){\circle*{0.3}}
\put(2.54,3.){\circle*{0.3}}
\put(2.54,1.002){\circle*{0.3}}
\put(0,3){\vector(1,0){6}}
\put(3,0){\vector(0,1){6}}
\multiput(1,0.21)(0,0.2){29}{\circle*{0.07}}
\multiput(2,0.21)(0,0.2){29}{\circle*{0.07}}
\multiput(4,0.21)(0,0.2){29}{\circle*{0.07}}
\multiput(5,0.21)(0,0.2){29}{\circle*{0.07}}
\multiput(0.2,1.01)(0.2,0){29}{\circle*{0.07}}
\multiput(0.2,2.01)(0.2,0){29}{\circle*{0.07}}
\multiput(0.2,4.01)(0.2,0){29}{\circle*{0.07}}
\multiput(0.2,5.01)(0.2,0){29}{\circle*{0.07}}
\end{picture}
\begin{picture}(7,6)
\put(0.0,5.2){$16^*$}
\put(4.504,4.014){\circle*{0.3}}
\put(2.232,3.038){\circle*{0.3}}
\put(2.266,4.982){\circle*{0.3}}
\put(2.232,2.962){\circle*{0.3}}
\put(2.266,1.018){\circle*{0.3}}
\put(4.504,1.986){\circle*{0.3}}
\put(0,3){\vector(1,0){6}}
\put(3,0){\vector(0,1){6}}
\multiput(1,0.21)(0,0.2){29}{\circle*{0.07}}
\multiput(2,0.21)(0,0.2){29}{\circle*{0.07}}
\multiput(4,0.21)(0,0.2){29}{\circle*{0.07}}
\multiput(5,0.21)(0,0.2){29}{\circle*{0.07}}
\multiput(0.2,1.01)(0.2,0){29}{\circle*{0.07}}
\multiput(0.2,2.01)(0.2,0){29}{\circle*{0.07}}
\multiput(0.2,4.01)(0.2,0){29}{\circle*{0.07}}
\multiput(0.2,5.01)(0.2,0){29}{\circle*{0.07}}
\end{picture}
\begin{picture}(7,6)
\put(0.0,5.2){$17$}
\put(4.476,4.012){\circle*{0.3}}
\put(2.142,4.998){\circle*{0.3}}
\put(2.152,3.){\circle*{0.3}}
\put(2.142,1.002){\circle*{0.3}}
\put(2.614,3.){\circle*{0.3}}
\put(4.476,1.988){\circle*{0.3}}
\put(0,3){\vector(1,0){6}}
\put(3,0){\vector(0,1){6}}
\multiput(1,0.21)(0,0.2){29}{\circle*{0.07}}
\multiput(2,0.21)(0,0.2){29}{\circle*{0.07}}
\multiput(4,0.21)(0,0.2){29}{\circle*{0.07}}
\multiput(5,0.21)(0,0.2){29}{\circle*{0.07}}
\multiput(0.2,1.01)(0.2,0){29}{\circle*{0.07}}
\multiput(0.2,2.01)(0.2,0){29}{\circle*{0.07}}
\multiput(0.2,4.01)(0.2,0){29}{\circle*{0.07}}
\multiput(0.2,5.01)(0.2,0){29}{\circle*{0.07}}
\end{picture}
\begin{picture}(7,6)
\put(0.0,5.2){$18$}
\put(4.438,4.012){\circle*{0.3}}
\put(2.93,3.){\circle*{0.3}}
\put(2.062,4.996){\circle*{0.3}}
\put(2.07,3.){\circle*{0.3}}
\put(2.062,1.004){\circle*{0.3}}
\put(4.438,1.988){\circle*{0.3}}
\put(0,3){\vector(1,0){6}}
\put(3,0){\vector(0,1){6}}
\multiput(1,0.21)(0,0.2){29}{\circle*{0.07}}
\multiput(2,0.21)(0,0.2){29}{\circle*{0.07}}
\multiput(4,0.21)(0,0.2){29}{\circle*{0.07}}
\multiput(5,0.21)(0,0.2){29}{\circle*{0.07}}
\multiput(0.2,1.01)(0.2,0){29}{\circle*{0.07}}
\multiput(0.2,2.01)(0.2,0){29}{\circle*{0.07}}
\multiput(0.2,4.01)(0.2,0){29}{\circle*{0.07}}
\multiput(0.2,5.01)(0.2,0){29}{\circle*{0.07}}
\end{picture}
\end{center}

\begin{center}
\unitlength 12pt
\begin{picture}(7,6)
\put(0.0,5.2){$19$}
\put(4.384,4.01){\circle*{0.3}}
\put(1.996,4.994){\circle*{0.3}}
\put(2.004,3.){\circle*{0.3}}
\put(1.996,1.006){\circle*{0.3}}
\put(3.234,3.){\circle*{0.3}}
\put(4.384,1.99){\circle*{0.3}}
\put(0,3){\vector(1,0){6}}
\put(3,0){\vector(0,1){6}}
\multiput(1,0.21)(0,0.2){29}{\circle*{0.07}}
\multiput(2,0.21)(0,0.2){29}{\circle*{0.07}}
\multiput(4,0.21)(0,0.2){29}{\circle*{0.07}}
\multiput(5,0.21)(0,0.2){29}{\circle*{0.07}}
\multiput(0.2,1.01)(0.2,0){29}{\circle*{0.07}}
\multiput(0.2,2.01)(0.2,0){29}{\circle*{0.07}}
\multiput(0.2,4.01)(0.2,0){29}{\circle*{0.07}}
\multiput(0.2,5.01)(0.2,0){29}{\circle*{0.07}}
\end{picture}
\begin{picture}(7,6)
\put(0.0,5.2){$20$}
\put(3.606,3.){\circle*{0.3}}
\put(4.282,4.006){\circle*{0.3}}
\put(1.942,4.992){\circle*{0.3}}
\put(1.946,3.){\circle*{0.3}}
\put(1.942,1.008){\circle*{0.3}}
\put(4.282,1.994){\circle*{0.3}}
\put(0,3){\vector(1,0){6}}
\put(3,0){\vector(0,1){6}}
\multiput(1,0.21)(0,0.2){29}{\circle*{0.07}}
\multiput(2,0.21)(0,0.2){29}{\circle*{0.07}}
\multiput(4,0.21)(0,0.2){29}{\circle*{0.07}}
\multiput(5,0.21)(0,0.2){29}{\circle*{0.07}}
\multiput(0.2,1.01)(0.2,0){29}{\circle*{0.07}}
\multiput(0.2,2.01)(0.2,0){29}{\circle*{0.07}}
\multiput(0.2,4.01)(0.2,0){29}{\circle*{0.07}}
\multiput(0.2,5.01)(0.2,0){29}{\circle*{0.07}}
\end{picture}
\begin{picture}(7,6)
\put(0.0,5.2){$21$}
\put(4.43,3.){\circle*{0.3}}
\put(3.916,4.){\circle*{0.3}}
\put(1.912,4.992){\circle*{0.3}}
\put(1.916,3.){\circle*{0.3}}
\put(1.912,1.008){\circle*{0.3}}
\put(3.916,2.){\circle*{0.3}}
\put(0,3){\vector(1,0){6}}
\put(3,0){\vector(0,1){6}}
\multiput(1,0.21)(0,0.2){29}{\circle*{0.07}}
\multiput(2,0.21)(0,0.2){29}{\circle*{0.07}}
\multiput(4,0.21)(0,0.2){29}{\circle*{0.07}}
\multiput(5,0.21)(0,0.2){29}{\circle*{0.07}}
\multiput(0.2,1.01)(0.2,0){29}{\circle*{0.07}}
\multiput(0.2,2.01)(0.2,0){29}{\circle*{0.07}}
\multiput(0.2,4.01)(0.2,0){29}{\circle*{0.07}}
\multiput(0.2,5.01)(0.2,0){29}{\circle*{0.07}}
\end{picture}
\end{center}

Our basic principle is to assign larger riggings for right strings
and smaller riggings for left strings.
Then if we compare the positions of 2-strings in Groups 1, 2 and 3,
we can assign $r_1=0$ for Group 1, $r_1=1$ for Group 2
and $r_1=2$ for Group 3 without ambiguity.
Then, within each group, we can assign $r_2=0,1,\ldots,6$
from smaller solutions number to larger solutions number.
For example, solution $\#6^*$ corresponds to $(0,5)$,
solution $\#11^*$ corresponds to $(1,3)$ and solution $\#16^*$ corresponds to $(2,1)$, respectively.
This is in agreement with the results of \cite{DG} based on the Bethe--Takahashi quantum number.

We observe that within each group, the positions of 2-string and 3-string are almost unchanged
whereas 1-string moves from left to right.
In this interpretation, we see that there are collisions of
1-string and 3-string at the solutions $\#6^*$, $\#11^*$ and $\#16^*$.
To summarize, we have demonstrated that the non self-conjugate strings of the form
(\ref{eq:3+1_non_self-conjugate-strings}) is generated by a fusion of
a 3-string and a 1-string even if they look irregular.
As we can see, they fit naturally with the rigged configurations picture.

Finally we remark that there are small deviations for the real parts of three roots within each 3-string.
However, as we can see in the following numerical tables, there are always clear
gaps between real parts of neighboring solutions.
Therefore, in this case, there is no ambiguity to introduce an order based on the real parts of the 3-strings.
In the following tables, the first column indicates the solutions number.
Within each solution, the first line contains 3-string and the second line contains
2-string and 1-string.

\paragraph{Group 1.}
\[
\begin{array}{l|lll}
\hline\hline
1&0.54241927&0.54455699+0.99639165 i&0.54455699-0.99639165 i\\
&-0.45810568+0.50017785 i&-0.45810568-0.50017785 i&-0.71532188\\
\hline
2&0.52708058&0.52957875+0.99660493 i&0.52957875-0.99660493 i\\
&-0.64127830+0.50335013 i&-0.64127830-0.50335013 i&-0.30368149\\
\hline
3&0.49893578&0.50200196+0.99724969 i&0.50200196-0.99724969 i\\
&-0.69284388+0.50515234 i&-0.69284388-0.50515234 i&-0.11725196\\
\hline
4&0.46564665&0.46941736+0.99809739 i&0.46941736-0.99809739 i\\
&-0.71976299+0.50614240 i&-0.71976299-0.50614240 i&0.035044606\\
\hline
5&0.42430010&0.42960641+0.99941330 i&0.42960641-0.99941330 i\\
&-0.73869344+0.50683156 i&-0.73869344-0.50683156 i&0.19387395\\
\hline
6^*&0.38490522+0.01906127 i&0.36730804+0.99179719 i&0.36730804-0.99179719 i\\
&-0.75221326+0.50729383 i&-0.75221326-0.50729383 i&0.38490522-0.01906127 i\\
\hline
7&0.23056669&0.23083274+0.99967059 i&0.23083274-0.99967059 i\\
&-0.76056174+0.50745313 i&-0.76056174-0.50745313 i&0.82889133\\
\hline
\end{array}
\]

\paragraph{Group 2.}
\[
\begin{array}{l|lll}
\hline\hline
8&0.20669577&0.20597572+1.00038608 i&0.20597572-1.00038608 i\\
&0.10578435+0.50000000 i&0.10578435-0.50000000 i&-0.83021590\\
\hline
9&0.059726272&0.06007063+0.99927337 i&0.06007063-0.99927337 i\\
&0.10847310+0.50000000 i&0.10847310-0.50000000 i&-0.39681373\\
\hline
10&0.010757119&0.01249979+0.99958901 i&0.01249979-0.99958901 i\\
&0.06941354+0.50000000 i&0.06941354-0.50000000 i&-0.17458378\\
\hline
11^*&0.018539900 i&0.99377501 i&-0.99377501 i\\
&0.50000000 i&-0.50000000 i&-0.018539900 i\\
\hline
12&-0.010757119&-0.01249979+0.99958901 i&-0.01249979-0.99958901 i\\
&-0.06941354+0.50000000 i&-0.06941354-0.50000000 i&0.17458378\\
\hline
13&-0.059726272&-0.06007063+0.99927337 i&-0.06007063-0.99927337 i\\
&-0.10847310+0.50000000 i&-0.10847310-0.50000000 i&0.39681373\\
\hline
14&-0.20669577&-0.20597572+1.00038608 i&-0.20597572-1.00038608 i\\
&-0.10578435+0.50000000 i&-0.10578435-0.50000000 i&0.83021590\\
\hline
\end{array}
\]

\paragraph{Group 3.}
\[
\begin{array}{l|lll}
\hline\hline
15&-0.23056669&-0.23083274+0.99967059 i&-0.23083274-0.99967059 i\\
&0.76056174+0.50745313 i&0.76056174-0.50745313 i&-0.82889133\\
\hline
16^*&-0.38490522+0.01906127 i&-0.36730804+0.99179719 i&-0.36730804-0.99179719i\\
&0.75221326+0.50729383 i&0.75221326-0.50729383 i&-0.38490522-0.01906127 i\\
\hline
17&-0.42430010&-0.42960641+0.99941330 i&-0.42960641-0.99941330 i\\
&0.73869344+0.50683156 i&0.73869344-0.50683156 i&-0.19387395\\
\hline
18&-0.46564665&-0.46941736+0.99809739 i&-0.46941736-0.99809739 i\\
&0.71976299+0.50614240 i&0.71976299-0.50614240 i&-0.035044606\\
\hline
19&-0.49893578&-0.50200196+0.99724969 i&-0.50200196-0.99724969 i\\
&0.69284388+0.50515234 i&0.69284388-0.50515234 i&0.11725196\\
\hline
20&-0.52708058&-0.52957875+0.99660493 i&-0.52957875-0.99660493 i\\
&0.64127830+0.50335013 i&0.64127830-0.50335013 i&0.30368149\\
\hline
21&-0.54241927&-0.54455699+0.99639165 i&-0.54455699-0.99639165 i\\
&0.45810568+0.50017785 i&0.45810568-0.50017785 i&0.71532188\\
\hline
\end{array}
\]

In \cite{DG} another set of non self-conjugate strings are reported at $N=12$ and $\nu=(3,1,1)$.
The total number of the corresponding solutions is 84.
Let us denote the solutions in the following way:
\[
\{\lambda_1,\lambda_2,
\eta^{(i)},\eta^{(0)},\eta^{(-i)}\}.
\]
Here $\lambda_1$ and $\lambda_2$ are 1-strings ($\lambda_1<\lambda_2$)
and $\eta$ corresponds to the 3-string
($\mathrm{Im}(\eta^{(i)})>\mathrm{Im}(\eta^{(0)})>\mathrm{Im}(\eta^{(-i)})$).
Below we denote the corresponding rigged configurations
$\{(3,r_1),(1,r_2),(1,r_3)\}$ by $(r_1,r_2,r_3)$.
Note that it is enough to consider the case $\mathrm{Re}(\eta)\geq 0$
since we can derive the remaining rigged configurations by the flip operation
(see \cite[Section 3]{KiSa14}).
We arrange the corresponding 84 solutions according to the descending order of
$\mathrm{Re}(\eta^{(\pm i)})$.\footnote{Alternatively we may define the order
according to $\mathrm{Re}(\eta^{(0)})$. This choice does not affect the final result.
However we prefer to use $\mathrm{Re}(\eta^{(\pm i)})$ since our interest here
is a special behavior of $\eta^{(0)}$.}
Then we recognize a regular structure related with the rigged configurations.
To be more specific, we start from the first solution and proceed the list one by one.
We can identify $(2,0,0)$, $(2,0,1)$, $\cdots$, $(2,0,6)$
by choosing the solutions such that $\lambda_1$ is decreasing and $\lambda_2$ is increasing.
After removing the selected seven solutions from the list,
we determine the solutions corresponding to $(2,1,1)$, $(2,1,2)$, $\cdots$, $(2,1,6)$ similarly.
We repeat the procedure until $(2,6,6)$.
Note that $\lambda_1$ for $(2,i,i)$ should be increasing.
Next we start from $(1,0,0)$ and proceed similarly.
The resulting rigged configurations coincide with the identification of \cite{GD2}
in terms of the Bethe--Takahashi quantum number.
As the result, we can see that the non self-conjugate strings in this case
is also obtained by a fusion of a 1-string and a 3-string as in the previous case.

\section{Exceptional real solutions}
\label{sec:korepin}
In this section, we analyze the famous counter example to the string hypothesis
discovered by Essler, Korepin and Schoutens \cite{EKS:1992} in 1992.
For this purpose we consider the case $N=25$ and $\ell=2$
where the corresponding Bethe ansatz equations are
\begin{align}
\left(\frac{\lambda_1+\frac{i}{2}}{\lambda_1-\frac{i}{2}}\right)^{25}
=\frac{\lambda_1-\lambda_2+i}{\lambda_1-\lambda_2-i},
\qquad
\left(\frac{\lambda_2+\frac{i}{2}}{\lambda_2-\frac{i}{2}}\right)^{25}
=\frac{\lambda_2-\lambda_1+i}{\lambda_2-\lambda_1-i}.
\end{align}
We consider the solutions which satisfy $\lambda_1\neq\lambda_2$.
Moreover we identify the solutions which are connected by the permutation
such as $\{\lambda_1,\lambda_2\}$ and $\{\lambda_2,\lambda_1\}$.
Then we have 255 real solutions and 21 complex solutions.

We start from the analysis of the real solutions.
In the rigged configurations picture, they should correspond to the following rigged configurations:
\begin{center}
\unitlength 12pt
\begin{picture}(3,2)
\put(0,1.1){21}
\put(0,0.1){21}
\put(1.2,0){\Yboxdim12pt\yng(1,1)}
\put(2.5,1.1){$r_1$}
\put(2.5,0.1){$r_2$}
\end{picture}
\end{center}
During this section, we refer to the above rigged configuration as $(r_1,r_2)$.
Then we have $21\geq r_1\geq r_2\geq 0$.
Now we introduce an order for the set of the real solutions $\{\lambda_1,\lambda_2\}$.
Without loss of generality we can suppose that $\lambda_1<\lambda_2$.
Then we arrange 255 real solutions in the descending order of the larger roots $\lambda_2$.
We call them $\{\lambda_1^{(1)},\lambda_2^{(1)}\}$, $\{\lambda_1^{(2)},\lambda_2^{(2)}\}$,
$\cdots$, $\{\lambda_1^{(255)},\lambda_2^{(255)}\}$.
By definition, we have $\lambda_2^{(i)}>\lambda_2^{(i+1)}$.
Remarkably, this simple procedure clarifies a natural relationship between the real solutions
and the rigged configurations and eventually two exceptional real solutions appear as the
elements which violate the rigged configurations structure.

Let us analyze the real solutions starting from $\{\lambda_1^{(1)},\lambda_2^{(1)}\}$,
$\{\lambda_1^{(2)},\lambda_2^{(2)}\}$, $\cdots$.
In the left of the following two diagrams, we show $\{\lambda_1^{(1)},\lambda_2^{(1)}\}$,
$\{\lambda_1^{(2)},\lambda_2^{(2)}\}$, $\cdots$, $\{\lambda_1^{(22)},\lambda_2^{(22)}\}$
from top to bottom.
The right diagram contains the solution $\{\lambda_1^{(23)},\lambda_2^{(23)}\}$.
In both diagrams, the rightward arrows represent the real axis on which we locate
each solution $\{\lambda_1^{(i)},\lambda_2^{(i)}\}$.
Here the spacing of the dotted lines is $0.5$ and the vertical arrows indicate the position of the origin 0.
\begin{center}
\unitlength 12pt
\begin{picture}(19,23)
\put(9,0){\vector(0,1){23}}
\multiput(0,1)(0,1){22}{\vector(1,0){18}}
\put(16.60,22){\circle*{0.3}}
\put(1.40,22){\circle*{0.3}}
\put(16.49,21){\circle*{0.3}}
\put(5.22,21){\circle*{0.3}}
\put(16.44,20){\circle*{0.3}}
\put(6.53,20){\circle*{0.3}}
\put(16.40,19){\circle*{0.3}}
\put(7.22,19){\circle*{0.3}}
\put(16.38,18){\circle*{0.3}}
\put(7.65,18){\circle*{0.3}}
\put(16.36,17){\circle*{0.3}}
\put(7.95,17){\circle*{0.3}}
\put(16.35,16){\circle*{0.3}}
\put(8.189,16){\circle*{0.3}}
\put(16.33,15){\circle*{0.3}}
\put(8.379,15){\circle*{0.3}}
\put(16.32,14){\circle*{0.3}}
\put(8.542,14){\circle*{0.3}}
\put(16.31,13){\circle*{0.3}}
\put(8.686,13){\circle*{0.3}}
\put(16.30,12){\circle*{0.3}}
\put(8.820,12){\circle*{0.3}}
\put(16.29,11){\circle*{0.3}}
\put(8.9478,11){\circle*{0.3}}
\put(16.28,10){\circle*{0.3}}
\put(9.0738,10){\circle*{0.3}}
\put(16.27,9){\circle*{0.3}}
\put(9.202,9){\circle*{0.3}}
\put(16.26,8){\circle*{0.3}}
\put(9.337,8){\circle*{0.3}}
\put(16.24,7){\circle*{0.3}}
\put(9.485,7){\circle*{0.3}}
\put(16.23,6){\circle*{0.3}}
\put(9.651,6){\circle*{0.3}}
\put(16.21,5){\circle*{0.3}}
\put(9.848,5){\circle*{0.3}}
\put(16.18,4){\circle*{0.3}}
\put(10.09,4){\circle*{0.3}}
\put(16.14,3){\circle*{0.3}}
\put(10.42,3){\circle*{0.3}}
\put(16.08,2){\circle*{0.3}}
\put(10.88,2){\circle*{0.3}}
\put(15.94,1){\circle*{0.3}}
\put(11.66,1){\circle*{0.3}}
\multiput(1,0.12)(1,0){8}{
\multiput(0,0)(0,0.2){115}{\circle*{0.07}}
}
\multiput(10,0.12)(1,0){8}{
\multiput(0,0)(0,0.2){115}{\circle*{0.07}}
}
\end{picture}
\begin{picture}(18,23)(0,-21)
\put(9,0){\vector(0,1){2}}
\multiput(0,1)(0,1){1}{\vector(1,0){18}}
\put(15.08,1){\circle*{0.3}}
\put(13.60,1){\circle*{0.3}}
\multiput(1,0.12)(1,0){8}{
\multiput(0,0)(0,0.2){10}{\circle*{0.07}}
}
\multiput(10,0.12)(1,0){8}{
\multiput(0,0)(0,0.2){10}{\circle*{0.07}}
}
\end{picture}
\end{center}
As we can see in the left one of the above diagrams, we can confirm the following property:
\begin{align*}
\lambda_2^{(1)}>\lambda_2^{(2)}>\cdots >\lambda_2^{(22)},\\
\lambda_1^{(1)}<\lambda_1^{(2)}<\cdots <\lambda_1^{(22)}.
\end{align*}
Again, we correspond the larger riggings with the right roots and smaller riggings with the left roots.
Therefore we have the following correspondence between the solutions
$\{\lambda_1^{(i)},\lambda_2^{(i)}\}$ and the rigged configurations $(r_1,r_2)$:
\begin{align*}
\{\lambda_1^{(1)},\lambda_2^{(1)}\}\longmapsto (21,0),\,\,
\{\lambda_1^{(2)},\lambda_2^{(2)}\}\longmapsto (21,1),\cdots,
\{\lambda_1^{(22)},\lambda_2^{(22)}\}\longmapsto (21,21).
\end{align*}

Let us postpone the analysis of the solution $\{\lambda_1^{(23)},\lambda_2^{(23)}\}$ for a while
and consider the analysis of the following two groups of the solutions:
\begin{align*}
&\{\lambda_1^{(24)},\lambda_2^{(24)}\}, \{\lambda_1^{(25)},\lambda_2^{(25)}\}, \cdots,
\{\lambda_1^{(44)},\lambda_2^{(44)}\},\\
&\{\lambda_1^{(45)},\lambda_2^{(45)}\}, \{\lambda_1^{(46)},\lambda_2^{(46)}\}, \cdots,
\{\lambda_1^{(64)},\lambda_2^{(64)}\}.
\end{align*}
The first group of the solutions are presented in the left one of the following two diagrams
and the second group in the right diagram.
\begin{center}
\unitlength 12pt
\begin{picture}(19,22)
\put(9,0){\vector(0,1){22}}
\multiput(0,1)(0,1){21}{\vector(1,0){18}}
\put(12.78,21){\circle*{0.3}}
\put(1.51,21){\circle*{0.3}}
\put(12.73,20){\circle*{0.3}}
\put(5.27,20){\circle*{0.3}}
\put(12.70,19){\circle*{0.3}}
\put(6.56,19){\circle*{0.3}}
\put(12.68,18){\circle*{0.3}}
\put(7.24,18){\circle*{0.3}}
\put(12.66,17){\circle*{0.3}}
\put(7.67,17){\circle*{0.3}}
\put(12.65,16){\circle*{0.3}}
\put(7.97,16){\circle*{0.3}}
\put(12.64,15){\circle*{0.3}}
\put(8.201,15){\circle*{0.3}}
\put(12.63,14){\circle*{0.3}}
\put(8.390,14){\circle*{0.3}}
\put(12.62,13){\circle*{0.3}}
\put(8.552,13){\circle*{0.3}}
\put(12.61,12){\circle*{0.3}}
\put(8.696,12){\circle*{0.3}}
\put(12.60,11){\circle*{0.3}}
\put(8.829,11){\circle*{0.3}}
\put(12.59,10){\circle*{0.3}}
\put(8.9573,10){\circle*{0.3}}
\put(12.58,9){\circle*{0.3}}
\put(9.0838,9){\circle*{0.3}}
\put(12.57,8){\circle*{0.3}}
\put(9.213,8){\circle*{0.3}}
\put(12.56,7){\circle*{0.3}}
\put(9.350,7){\circle*{0.3}}
\put(12.55,6){\circle*{0.3}}
\put(9.499,6){\circle*{0.3}}
\put(12.53,5){\circle*{0.3}}
\put(9.669,5){\circle*{0.3}}
\put(12.51,4){\circle*{0.3}}
\put(9.872,4){\circle*{0.3}}
\put(12.48,3){\circle*{0.3}}
\put(10.13,3){\circle*{0.3}}
\put(12.44,2){\circle*{0.3}}
\put(10.47,2){\circle*{0.3}}
\put(12.35,1){\circle*{0.3}}
\put(11.00,1){\circle*{0.3}}
\multiput(1,0.12)(1,0){8}{
\multiput(0,0)(0,0.2){110}{\circle*{0.07}}
}
\multiput(10,0.12)(1,0){8}{
\multiput(0,0)(0,0.2){110}{\circle*{0.07}}
}
\end{picture}
\begin{picture}(18,22)(0,-1)
\put(9,0){\vector(0,1){21}}
\multiput(0,1)(0,1){20}{\vector(1,0){18}}
\put(11.47,20){\circle*{0.3}}
\put(1.56,20){\circle*{0.3}}
\put(11.44,19){\circle*{0.3}}
\put(5.30,19){\circle*{0.3}}
\put(11.41,18){\circle*{0.3}}
\put(6.59,18){\circle*{0.3}}
\put(11.40,17){\circle*{0.3}}
\put(7.26,17){\circle*{0.3}}
\put(11.39,16){\circle*{0.3}}
\put(7.68,16){\circle*{0.3}}
\put(11.38,15){\circle*{0.3}}
\put(7.98,15){\circle*{0.3}}
\put(11.37,14){\circle*{0.3}}
\put(8.210,14){\circle*{0.3}}
\put(11.36,13){\circle*{0.3}}
\put(8.398,13){\circle*{0.3}}
\put(11.35,12){\circle*{0.3}}
\put(8.559,12){\circle*{0.3}}
\put(11.35,11){\circle*{0.3}}
\put(8.703,11){\circle*{0.3}}
\put(11.34,10){\circle*{0.3}}
\put(8.837,10){\circle*{0.3}}
\put(11.33,9){\circle*{0.3}}
\put(8.9652,9){\circle*{0.3}}
\put(11.32,8){\circle*{0.3}}
\put(9.0923,8){\circle*{0.3}}
\put(11.32,7){\circle*{0.3}}
\put(9.223,7){\circle*{0.3}}
\put(11.31,6){\circle*{0.3}}
\put(9.361,6){\circle*{0.3}}
\put(11.30,5){\circle*{0.3}}
\put(9.512,5){\circle*{0.3}}
\put(11.28,4){\circle*{0.3}}
\put(9.686,4){\circle*{0.3}}
\put(11.26,3){\circle*{0.3}}
\put(9.895,3){\circle*{0.3}}
\put(11.24,2){\circle*{0.3}}
\put(10.16,2){\circle*{0.3}}
\put(11.20,1){\circle*{0.3}}
\put(10.53,1){\circle*{0.3}}
\multiput(1,0.12)(1,0){8}{
\multiput(0,0)(0,0.2){105}{\circle*{0.07}}
}
\multiput(10,0.12)(1,0){8}{
\multiput(0,0)(0,0.2){105}{\circle*{0.07}}
}
\end{picture}
\end{center}
As we can confirm from the left one of the above diagrams, we have
\begin{align*}
\lambda_2^{(24)}>\lambda_2^{(25)}>\cdots >\lambda_2^{(44)},\\
\lambda_1^{(24)}<\lambda_1^{(25)}<\cdots <\lambda_1^{(44)},
\end{align*}
and from the right one of the above diagrams, we have
\begin{align*}
\lambda_2^{(45)}>\lambda_2^{(46)}>\cdots >\lambda_2^{(64)},\\
\lambda_1^{(45)}<\lambda_1^{(46)}<\cdots <\lambda_1^{(64)}.
\end{align*}
Moreover we can recognize clear gaps between three groups of the values:
\[
\{\lambda_2^{(1)},\lambda_2^{(2)},\ldots,\lambda_2^{(22)}\},\,
\{\lambda_2^{(24)},\lambda_2^{(25)},\ldots,\lambda_2^{(44)}\}\text{ and }
\{\lambda_2^{(45)},\lambda_2^{(46)},\ldots,\lambda_2^{(64)}\}.
\]
Therefore we conclude the following correspondences between the solutions
$\{\lambda_1^{(i)},\lambda_2^{(i)}\}$ and the rigged configurations $(r_1,r_2)$:
\begin{align*}
\{\lambda_1^{(24)},\lambda_2^{(24)}\}\longmapsto (20,0),\,\,
\{\lambda_1^{(25)},\lambda_2^{(25)}\}\longmapsto (20,1),\cdots,
\{\lambda_1^{(44)},\lambda_2^{(44)}\}\longmapsto (20,20),
\end{align*}
and
\begin{align*}
\{\lambda_1^{(45)},\lambda_2^{(45)}\}\longmapsto (19,0),\,\,
\{\lambda_1^{(46)},\lambda_2^{(46)}\}\longmapsto (19,1),\cdots,
\{\lambda_1^{(64)},\lambda_2^{(64)}\}\longmapsto (19,19).
\end{align*}

Let us return to the solution $\{\lambda_1^{(23)},\lambda_2^{(23)}\}$.
From the above analysis, we see that this solution is isolated and
does not fit into the rigged configurations picture.
Therefore the solution $\{\lambda_1^{(23)},\lambda_2^{(23)}\}$
is one of the exceptional real solutions.

Let us continue in a similar way.
The left one of the following diagrams shows the solutions
$\{\lambda_1^{(65)},\lambda_2^{(65)}\}$, $\cdots$, $\{\lambda_1^{(83)},\lambda_2^{(83)}\}$
and the right diagram shows the solutions
$\{\lambda_1^{(84)},\lambda_2^{(84)}\}$, $\cdots$, $\{\lambda_1^{(101)},\lambda_2^{(101)}\}$.
The first group corresponds to the rigged configurations $(18,0)$, $\cdots$, $(18,18)$
and the second group corresponds to $(17,0)$, $\cdots$, $(17,17)$, respectively.
\begin{center}
\unitlength 12pt
\begin{picture}(19,20)
\put(9,0){\vector(0,1){20}}
\multiput(0,1)(0,1){19}{\vector(1,0){18}}
\put(10.78,19){\circle*{0.3}}
\put(1.60,19){\circle*{0.3}}
\put(10.76,18){\circle*{0.3}}
\put(5.32,18){\circle*{0.3}}
\put(10.74,17){\circle*{0.3}}
\put(6.60,17){\circle*{0.3}}
\put(10.73,16){\circle*{0.3}}
\put(7.27,16){\circle*{0.3}}
\put(10.72,15){\circle*{0.3}}
\put(7.69,15){\circle*{0.3}}
\put(10.71,14){\circle*{0.3}}
\put(7.99,14){\circle*{0.3}}
\put(10.71,13){\circle*{0.3}}
\put(8.217,13){\circle*{0.3}}
\put(10.70,12){\circle*{0.3}}
\put(8.405,12){\circle*{0.3}}
\put(10.70,11){\circle*{0.3}}
\put(8.566,11){\circle*{0.3}}
\put(10.69,10){\circle*{0.3}}
\put(8.710,10){\circle*{0.3}}
\put(10.68,9){\circle*{0.3}}
\put(8.843,9){\circle*{0.3}}
\put(10.68,8){\circle*{0.3}}
\put(8.9717,8){\circle*{0.3}}
\put(10.67,7){\circle*{0.3}}
\put(9.0993,7){\circle*{0.3}}
\put(10.67,6){\circle*{0.3}}
\put(9.230,6){\circle*{0.3}}
\put(10.66,5){\circle*{0.3}}
\put(9.370,5){\circle*{0.3}}
\put(10.65,4){\circle*{0.3}}
\put(9.523,4){\circle*{0.3}}
\put(10.64,3){\circle*{0.3}}
\put(9.700,3){\circle*{0.3}}
\put(10.62,2){\circle*{0.3}}
\put(9.914,2){\circle*{0.3}}
\put(10.60,1){\circle*{0.3}}
\put(10.19,1){\circle*{0.3}}
\multiput(1,0.12)(1,0){8}{
\multiput(0,0)(0,0.2){100}{\circle*{0.07}}
}
\multiput(10,0.12)(1,0){8}{
\multiput(0,0)(0,0.2){100}{\circle*{0.07}}
}
\end{picture}
\begin{picture}(18,20)(0,-1)
\put(9,0){\vector(0,1){19}}
\multiput(0,1)(0,1){18}{\vector(1,0){18}}
\put(10.35,18){\circle*{0.3}}
\put(1.62,18){\circle*{0.3}}
\put(10.33,17){\circle*{0.3}}
\put(5.34,17){\circle*{0.3}}
\put(10.32,16){\circle*{0.3}}
\put(6.61,16){\circle*{0.3}}
\put(10.31,15){\circle*{0.3}}
\put(7.28,15){\circle*{0.3}}
\put(10.30,14){\circle*{0.3}}
\put(7.70,14){\circle*{0.3}}
\put(10.30,13){\circle*{0.3}}
\put(7.99,13){\circle*{0.3}}
\put(10.29,12){\circle*{0.3}}
\put(8.223,12){\circle*{0.3}}
\put(10.29,11){\circle*{0.3}}
\put(8.410,11){\circle*{0.3}}
\put(10.28,10){\circle*{0.3}}
\put(8.571,10){\circle*{0.3}}
\put(10.28,9){\circle*{0.3}}
\put(8.715,9){\circle*{0.3}}
\put(10.27,8){\circle*{0.3}}
\put(8.848,8){\circle*{0.3}}
\put(10.27,7){\circle*{0.3}}
\put(8.9771,7){\circle*{0.3}}
\put(10.26,6){\circle*{0.3}}
\put(9.105,6){\circle*{0.3}}
\put(10.26,5){\circle*{0.3}}
\put(9.237,5){\circle*{0.3}}
\put(10.25,4){\circle*{0.3}}
\put(9.377,4){\circle*{0.3}}
\put(10.24,3){\circle*{0.3}}
\put(9.532,3){\circle*{0.3}}
\put(10.23,2){\circle*{0.3}}
\put(9.711,2){\circle*{0.3}}
\put(10.22,1){\circle*{0.3}}
\put(9.928,1){\circle*{0.3}}
\multiput(1,0.12)(1,0){8}{
\multiput(0,0)(0,0.2){95}{\circle*{0.07}}
}
\multiput(10,0.12)(1,0){8}{
\multiput(0,0)(0,0.2){95}{\circle*{0.07}}
}
\end{picture}
\end{center}

The left one of the following diagrams shows the solutions
$\{\lambda_1^{(102)},\lambda_2^{(102)}\}$, $\cdots$, $\{\lambda_1^{(118)},\lambda_2^{(118)}\}$
and the right diagram shows the solutions
$\{\lambda_1^{(119)},\lambda_2^{(119)}\}$, $\cdots$, $\{\lambda_1^{(134)},\lambda_2^{(134)}\}$.
The first group corresponds to the rigged configurations $(16,0)$, $\cdots$, $(16,16)$
and the second group corresponds to $(15,0)$, $\cdots$, $(15,15)$, respectively.
\begin{center}
\unitlength 12pt
\begin{picture}(19,18)
\put(9,0){\vector(0,1){18}}
\multiput(0,1)(0,1){17}{\vector(1,0){18}}
\put(10.05,17){\circle*{0.3}}
\put(1.64,17){\circle*{0.3}}
\put(10.03,16){\circle*{0.3}}
\put(5.35,16){\circle*{0.3}}
\put(10.02,15){\circle*{0.3}}
\put(6.62,15){\circle*{0.3}}
\put(10.01,14){\circle*{0.3}}
\put(7.29,14){\circle*{0.3}}
\put(10.01,13){\circle*{0.3}}
\put(7.70,13){\circle*{0.3}}
\put(10.00,12){\circle*{0.3}}
\put(8.00,12){\circle*{0.3}}
\put(9.995,11){\circle*{0.3}}
\put(8.228,11){\circle*{0.3}}
\put(9.991,10){\circle*{0.3}}
\put(8.415,10){\circle*{0.3}}
\put(9.986,9){\circle*{0.3}}
\put(8.575,9){\circle*{0.3}}
\put(9.982,8){\circle*{0.3}}
\put(8.719,8){\circle*{0.3}}
\put(9.978,7){\circle*{0.3}}
\put(8.853,7){\circle*{0.3}}
\put(9.974,6){\circle*{0.3}}
\put(8.9816,6){\circle*{0.3}}
\put(9.970,5){\circle*{0.3}}
\put(9.110,5){\circle*{0.3}}
\put(9.966,4){\circle*{0.3}}
\put(9.242,4){\circle*{0.3}}
\put(9.960,3){\circle*{0.3}}
\put(9.383,3){\circle*{0.3}}
\put(9.955,2){\circle*{0.3}}
\put(9.539,2){\circle*{0.3}}
\put(9.948,1){\circle*{0.3}}
\put(9.720,1){\circle*{0.3}}
\multiput(1,0.12)(1,0){8}{
\multiput(0,0)(0,0.2){90}{\circle*{0.07}}
}
\multiput(10,0.12)(1,0){8}{
\multiput(0,0)(0,0.2){90}{\circle*{0.07}}
}
\end{picture}
\begin{picture}(18,18)(0,-1)
\put(9,0){\vector(0,1){17}}
\multiput(0,1)(0,1){16}{\vector(1,0){18}}
\put(9.811,16){\circle*{0.3}}
\put(1.65,16){\circle*{0.3}}
\put(9.799,15){\circle*{0.3}}
\put(5.36,15){\circle*{0.3}}
\put(9.790,14){\circle*{0.3}}
\put(6.63,14){\circle*{0.3}}
\put(9.783,13){\circle*{0.3}}
\put(7.29,13){\circle*{0.3}}
\put(9.777,12){\circle*{0.3}}
\put(7.71,12){\circle*{0.3}}
\put(9.772,11){\circle*{0.3}}
\put(8.005,11){\circle*{0.3}}
\put(9.767,10){\circle*{0.3}}
\put(8.233,10){\circle*{0.3}}
\put(9.763,9){\circle*{0.3}}
\put(8.419,9){\circle*{0.3}}
\put(9.760,8){\circle*{0.3}}
\put(8.579,8){\circle*{0.3}}
\put(9.756,7){\circle*{0.3}}
\put(8.723,7){\circle*{0.3}}
\put(9.753,6){\circle*{0.3}}
\put(8.857,6){\circle*{0.3}}
\put(9.749,5){\circle*{0.3}}
\put(8.9854,5){\circle*{0.3}}
\put(9.745,4){\circle*{0.3}}
\put(9.114,4){\circle*{0.3}}
\put(9.741,3){\circle*{0.3}}
\put(9.246,3){\circle*{0.3}}
\put(9.737,2){\circle*{0.3}}
\put(9.388,2){\circle*{0.3}}
\put(9.732,1){\circle*{0.3}}
\put(9.545,1){\circle*{0.3}}
\multiput(1,0.12)(1,0){8}{
\multiput(0,0)(0,0.2){85}{\circle*{0.07}}
}
\multiput(10,0.12)(1,0){8}{
\multiput(0,0)(0,0.2){85}{\circle*{0.07}}
}
\end{picture}
\end{center}

The left one of the following diagrams shows the solutions
$\{\lambda_1^{(135)},\lambda_2^{(135)}\}$, $\cdots$, $\{\lambda_1^{(149)},\lambda_2^{(149)}\}$
and the right diagram shows the solutions
$\{\lambda_1^{(150)},\lambda_2^{(150)}\}$, $\cdots$, $\{\lambda_1^{(163)},\lambda_2^{(163)}\}$.
The first group corresponds to the rigged configurations $(14,0)$, $\cdots$, $(14,14)$
and the second group corresponds to $(13,0)$, $\cdots$, $(13,13)$, respectively.
\begin{center}
\unitlength 12pt
\begin{picture}(19,16)
\put(9,0){\vector(0,1){16}}
\multiput(0,1)(0,1){15}{\vector(1,0){18}}
\put(9.621,15){\circle*{0.3}}
\put(1.67,15){\circle*{0.3}}
\put(9.610,14){\circle*{0.3}}
\put(5.37,14){\circle*{0.3}}
\put(9.602,13){\circle*{0.3}}
\put(6.64,13){\circle*{0.3}}
\put(9.595,12){\circle*{0.3}}
\put(7.30,12){\circle*{0.3}}
\put(9.590,11){\circle*{0.3}}
\put(7.71,11){\circle*{0.3}}
\put(9.585,10){\circle*{0.3}}
\put(8.009,10){\circle*{0.3}}
\put(9.581,9){\circle*{0.3}}
\put(8.237,9){\circle*{0.3}}
\put(9.577,8){\circle*{0.3}}
\put(8.423,8){\circle*{0.3}}
\put(9.574,7){\circle*{0.3}}
\put(8.583,7){\circle*{0.3}}
\put(9.571,6){\circle*{0.3}}
\put(8.726,6){\circle*{0.3}}
\put(9.568,5){\circle*{0.3}}
\put(8.860,5){\circle*{0.3}}
\put(9.564,4){\circle*{0.3}}
\put(8.9888,4){\circle*{0.3}}
\put(9.561,3){\circle*{0.3}}
\put(9.117,3){\circle*{0.3}}
\put(9.558,2){\circle*{0.3}}
\put(9.250,2){\circle*{0.3}}
\put(9.554,1){\circle*{0.3}}
\put(9.392,1){\circle*{0.3}}
\multiput(1,0.12)(1,0){8}{
\multiput(0,0)(0,0.2){80}{\circle*{0.07}}
}
\multiput(10,0.12)(1,0){8}{
\multiput(0,0)(0,0.2){80}{\circle*{0.07}}
}
\end{picture}
\begin{picture}(18,16)(0,-1)
\put(9,0){\vector(0,1){15}}
\multiput(0,1)(0,1){14}{\vector(1,0){18}}
\put(9.458,14){\circle*{0.3}}
\put(1.68,14){\circle*{0.3}}
\put(9.448,13){\circle*{0.3}}
\put(5.38,13){\circle*{0.3}}
\put(9.441,12){\circle*{0.3}}
\put(6.65,12){\circle*{0.3}}
\put(9.434,11){\circle*{0.3}}
\put(7.30,11){\circle*{0.3}}
\put(9.429,10){\circle*{0.3}}
\put(7.72,10){\circle*{0.3}}
\put(9.425,9){\circle*{0.3}}
\put(8.014,9){\circle*{0.3}}
\put(9.421,8){\circle*{0.3}}
\put(8.240,8){\circle*{0.3}}
\put(9.417,7){\circle*{0.3}}
\put(8.426,7){\circle*{0.3}}
\put(9.414,6){\circle*{0.3}}
\put(8.586,6){\circle*{0.3}}
\put(9.411,5){\circle*{0.3}}
\put(8.729,5){\circle*{0.3}}
\put(9.408,4){\circle*{0.3}}
\put(8.863,4){\circle*{0.3}}
\put(9.405,3){\circle*{0.3}}
\put(8.99184,3){\circle*{0.3}}
\put(9.402,2){\circle*{0.3}}
\put(9.121,2){\circle*{0.3}}
\put(9.399,1){\circle*{0.3}}
\put(9.254,1){\circle*{0.3}}
\multiput(1,0.12)(1,0){8}{
\multiput(0,0)(0,0.2){75}{\circle*{0.07}}
}
\multiput(10,0.12)(1,0){8}{
\multiput(0,0)(0,0.2){75}{\circle*{0.07}}
}
\end{picture}
\end{center}

The left one of the following diagrams shows the solutions
$\{\lambda_1^{(164)},\lambda_2^{(164)}\}$, $\cdots$, $\{\lambda_1^{(176)},\lambda_2^{(176)}\}$
and the right diagram shows the solutions
$\{\lambda_1^{(177)},\lambda_2^{(177)}\}$, $\cdots$, $\{\lambda_1^{(188)},\lambda_2^{(188)}\}$.
The first group corresponds to the rigged configurations $(12,0)$, $\cdots$, $(12,12)$
and the second group corresponds to $(11,0)$, $\cdots$, $(11,11)$, respectively.
\begin{center}
\unitlength 12pt
\begin{picture}(19,14)
\put(9,0){\vector(0,1){14}}
\multiput(0,1)(0,1){13}{\vector(1,0){18}}
\put(9.314,13){\circle*{0.3}}
\put(1.69,13){\circle*{0.3}}
\put(9.304,12){\circle*{0.3}}
\put(5.39,12){\circle*{0.3}}
\put(9.297,11){\circle*{0.3}}
\put(6.65,11){\circle*{0.3}}
\put(9.290,10){\circle*{0.3}}
\put(7.31,10){\circle*{0.3}}
\put(9.285,9){\circle*{0.3}}
\put(7.72,9){\circle*{0.3}}
\put(9.281,8){\circle*{0.3}}
\put(8.018,8){\circle*{0.3}}
\put(9.277,7){\circle*{0.3}}
\put(8.244,7){\circle*{0.3}}
\put(9.274,6){\circle*{0.3}}
\put(8.429,6){\circle*{0.3}}
\put(9.271,5){\circle*{0.3}}
\put(8.589,5){\circle*{0.3}}
\put(9.268,4){\circle*{0.3}}
\put(8.732,4){\circle*{0.3}}
\put(9.265,3){\circle*{0.3}}
\put(8.866,3){\circle*{0.3}}
\put(9.262,2){\circle*{0.3}}
\put(8.99468,2){\circle*{0.3}}
\put(9.260,1){\circle*{0.3}}
\put(9.124,1){\circle*{0.3}}
\multiput(1,0.12)(1,0){8}{
\multiput(0,0)(0,0.2){70}{\circle*{0.07}}
}
\multiput(10,0.12)(1,0){8}{
\multiput(0,0)(0,0.2){70}{\circle*{0.07}}
}
\end{picture}
\begin{picture}(18,14)(0,-1)
\put(9,0){\vector(0,1){13}}
\multiput(0,1)(0,1){12}{\vector(1,0){18}}
\put(9.180,12){\circle*{0.3}}
\put(1.70,12){\circle*{0.3}}
\put(9.171,11){\circle*{0.3}}
\put(5.40,11){\circle*{0.3}}
\put(9.163,10){\circle*{0.3}}
\put(6.66,10){\circle*{0.3}}
\put(9.157,9){\circle*{0.3}}
\put(7.32,9){\circle*{0.3}}
\put(9.152,8){\circle*{0.3}}
\put(7.73,8){\circle*{0.3}}
\put(9.147,7){\circle*{0.3}}
\put(8.022,7){\circle*{0.3}}
\put(9.143,6){\circle*{0.3}}
\put(8.247,6){\circle*{0.3}}
\put(9.140,5){\circle*{0.3}}
\put(8.432,5){\circle*{0.3}}
\put(9.137,4){\circle*{0.3}}
\put(8.592,4){\circle*{0.3}}
\put(9.134,3){\circle*{0.3}}
\put(8.735,3){\circle*{0.3}}
\put(9.132,2){\circle*{0.3}}
\put(8.868,2){\circle*{0.3}}
\put(9.129,1){\circle*{0.3}}
\put(8.99737,1){\circle*{0.3}}
\multiput(1,0.12)(1,0){8}{
\multiput(0,0)(0,0.2){65}{\circle*{0.07}}
}
\multiput(10,0.12)(1,0){8}{
\multiput(0,0)(0,0.2){65}{\circle*{0.07}}
}
\end{picture}
\end{center}

The left one of the following diagrams shows the solutions
$\{\lambda_1^{(189)},\lambda_2^{(189)}\}$, $\cdots$, $\{\lambda_1^{(199)},\lambda_2^{(199)}\}$
and the right diagram shows the solutions
$\{\lambda_1^{(200)},\lambda_2^{(200)}\}$, $\cdots$, $\{\lambda_1^{(209)},\lambda_2^{(209)}\}$.
The first group corresponds to the rigged configurations $(10,0)$, $\cdots$, $(10,10)$
and the second group corresponds to $(9,0)$, $\cdots$, $(9,9)$, respectively.
\begin{center}
\unitlength 12pt
\begin{picture}(19,12)
\put(9,0){\vector(0,1){12}}
\multiput(0,1)(0,1){11}{\vector(1,0){18}}
\put(9.0522,11){\circle*{0.3}}
\put(1.71,11){\circle*{0.3}}
\put(9.0427,10){\circle*{0.3}}
\put(5.41,10){\circle*{0.3}}
\put(9.0348,9){\circle*{0.3}}
\put(6.67,9){\circle*{0.3}}
\put(9.0283,8){\circle*{0.3}}
\put(7.32,8){\circle*{0.3}}
\put(9.0229,7){\circle*{0.3}}
\put(7.73,7){\circle*{0.3}}
\put(9.0184,6){\circle*{0.3}}
\put(8.026,6){\circle*{0.3}}
\put(9.0146,5){\circle*{0.3}}
\put(8.251,5){\circle*{0.3}}
\put(9.0112,4){\circle*{0.3}}
\put(8.436,4){\circle*{0.3}}
\put(9.00816,3){\circle*{0.3}}
\put(8.595,3){\circle*{0.3}}
\put(9.00532,2){\circle*{0.3}}
\put(8.738,2){\circle*{0.3}}
\put(9.00263,1){\circle*{0.3}}
\put(8.871,1){\circle*{0.3}}
\multiput(1,0.12)(1,0){8}{
\multiput(0,0)(0,0.2){60}{\circle*{0.07}}
}
\multiput(10,0.12)(1,0){8}{
\multiput(0,0)(0,0.2){60}{\circle*{0.07}}
}
\end{picture}
\begin{picture}(18,12)(0,-1)
\put(9,0){\vector(0,1){11}}
\multiput(0,1)(0,1){10}{\vector(1,0){18}}
\put(8.9262,10){\circle*{0.3}}
\put(1.72,10){\circle*{0.3}}
\put(8.9162,9){\circle*{0.3}}
\put(5.42,9){\circle*{0.3}}
\put(8.9077,8){\circle*{0.3}}
\put(6.68,8){\circle*{0.3}}
\put(8.9007,7){\circle*{0.3}}
\put(7.33,7){\circle*{0.3}}
\put(8.895,6){\circle*{0.3}}
\put(7.74,6){\circle*{0.3}}
\put(8.890,5){\circle*{0.3}}
\put(8.030,5){\circle*{0.3}}
\put(8.886,4){\circle*{0.3}}
\put(8.255,4){\circle*{0.3}}
\put(8.883,3){\circle*{0.3}}
\put(8.439,3){\circle*{0.3}}
\put(8.879,2){\circle*{0.3}}
\put(8.598,2){\circle*{0.3}}
\put(8.876,1){\circle*{0.3}}
\put(8.740,1){\circle*{0.3}}
\multiput(1,0.12)(1,0){8}{
\multiput(0,0)(0,0.2){55}{\circle*{0.07}}
}
\multiput(10,0.12)(1,0){8}{
\multiput(0,0)(0,0.2){55}{\circle*{0.07}}
}
\end{picture}
\end{center}

The left one of the following diagrams shows the solutions
$\{\lambda_1^{(210)},\lambda_2^{(210)}\}$, $\cdots$, $\{\lambda_1^{(218)},\lambda_2^{(218)}\}$
and the right diagram shows the solutions
$\{\lambda_1^{(219)},\lambda_2^{(219)}\}$, $\cdots$, $\{\lambda_1^{(226)},\lambda_2^{(226)}\}$.
The first group corresponds to the rigged configurations $(8,0)$, $\cdots$, $(8,8)$
and the second group corresponds to $(7,0)$, $\cdots$, $(7,7)$, respectively.
\begin{center}
\unitlength 12pt
\begin{picture}(19,10)
\put(9,0){\vector(0,1){10}}
\multiput(0,1)(0,1){9}{\vector(1,0){18}}
\put(8.798,9){\circle*{0.3}}
\put(1.73,9){\circle*{0.3}}
\put(8.787,8){\circle*{0.3}}
\put(5.43,8){\circle*{0.3}}
\put(8.777,7){\circle*{0.3}}
\put(6.68,7){\circle*{0.3}}
\put(8.770,6){\circle*{0.3}}
\put(7.33,6){\circle*{0.3}}
\put(8.763,5){\circle*{0.3}}
\put(7.74,5){\circle*{0.3}}
\put(8.758,4){\circle*{0.3}}
\put(8.034,4){\circle*{0.3}}
\put(8.754,3){\circle*{0.3}}
\put(8.259,3){\circle*{0.3}}
\put(8.750,2){\circle*{0.3}}
\put(8.442,2){\circle*{0.3}}
\put(8.746,1){\circle*{0.3}}
\put(8.601,1){\circle*{0.3}}
\multiput(1,0.12)(1,0){8}{
\multiput(0,0)(0,0.2){50}{\circle*{0.07}}
}
\multiput(10,0.12)(1,0){8}{
\multiput(0,0)(0,0.2){50}{\circle*{0.07}}
}
\end{picture}
\begin{picture}(18,10)(0,-1)
\put(9,0){\vector(0,1){9}}
\multiput(0,1)(0,1){8}{\vector(1,0){18}}
\put(8.663,8){\circle*{0.3}}
\put(1.74,8){\circle*{0.3}}
\put(8.650,7){\circle*{0.3}}
\put(5.44,7){\circle*{0.3}}
\put(8.639,6){\circle*{0.3}}
\put(6.69,6){\circle*{0.3}}
\put(8.630,5){\circle*{0.3}}
\put(7.34,5){\circle*{0.3}}
\put(8.623,4){\circle*{0.3}}
\put(7.75,4){\circle*{0.3}}
\put(8.617,3){\circle*{0.3}}
\put(8.040,3){\circle*{0.3}}
\put(8.612,2){\circle*{0.3}}
\put(8.263,2){\circle*{0.3}}
\put(8.608,1){\circle*{0.3}}
\put(8.446,1){\circle*{0.3}}
\multiput(1,0.12)(1,0){8}{
\multiput(0,0)(0,0.2){45}{\circle*{0.07}}
}
\multiput(10,0.12)(1,0){8}{
\multiput(0,0)(0,0.2){45}{\circle*{0.07}}
}
\end{picture}
\end{center}

The left one of the following diagrams shows the solutions
$\{\lambda_1^{(227)},\lambda_2^{(227)}\}$, $\cdots$, $\{\lambda_1^{(233)},\lambda_2^{(233)}\}$
and the right diagram shows the solutions
$\{\lambda_1^{(234)},\lambda_2^{(234)}\}$, $\cdots$, $\{\lambda_1^{(239)},\lambda_2^{(239)}\}$.
The first group corresponds to the rigged configurations $(6,0)$, $\cdots$, $(6,6)$
and the second group corresponds to $(5,0)$, $\cdots$, $(5,5)$, respectively.
\begin{center}
\unitlength 12pt
\begin{picture}(19,8)
\put(9,0){\vector(0,1){8}}
\multiput(0,1)(0,1){7}{\vector(1,0){18}}
\put(8.515,7){\circle*{0.3}}
\put(1.76,7){\circle*{0.3}}
\put(8.501,6){\circle*{0.3}}
\put(5.45,6){\circle*{0.3}}
\put(8.488,5){\circle*{0.3}}
\put(6.70,5){\circle*{0.3}}
\put(8.477,4){\circle*{0.3}}
\put(7.35,4){\circle*{0.3}}
\put(8.468,3){\circle*{0.3}}
\put(7.76,3){\circle*{0.3}}
\put(8.461,2){\circle*{0.3}}
\put(8.045,2){\circle*{0.3}}
\put(8.455,1){\circle*{0.3}}
\put(8.268,1){\circle*{0.3}}
\multiput(1,0.12)(1,0){8}{
\multiput(0,0)(0,0.2){40}{\circle*{0.07}}
}
\multiput(10,0.12)(1,0){8}{
\multiput(0,0)(0,0.2){40}{\circle*{0.07}}
}
\end{picture}
\begin{picture}(18,8)(0,-1)
\put(9,0){\vector(0,1){7}}
\multiput(0,1)(0,1){6}{\vector(1,0){18}}
\put(8.349,6){\circle*{0.3}}
\put(1.77,6){\circle*{0.3}}
\put(8.331,5){\circle*{0.3}}
\put(5.47,5){\circle*{0.3}}
\put(8.314,4){\circle*{0.3}}
\put(6.72,4){\circle*{0.3}}
\put(8.300,3){\circle*{0.3}}
\put(7.36,3){\circle*{0.3}}
\put(8.289,2){\circle*{0.3}}
\put(7.77,2){\circle*{0.3}}
\put(8.280,1){\circle*{0.3}}
\put(8.052,1){\circle*{0.3}}
\multiput(1,0.12)(1,0){8}{
\multiput(0,0)(0,0.2){35}{\circle*{0.07}}
}
\multiput(10,0.12)(1,0){8}{
\multiput(0,0)(0,0.2){35}{\circle*{0.07}}
}
\end{picture}
\end{center}

The left one of the following diagrams shows the solutions
$\{\lambda_1^{(240)},\lambda_2^{(240)}\}$, $\cdots$, $\{\lambda_1^{(244)},\lambda_2^{(244)}\}$
and the right diagram shows the solutions
$\{\lambda_1^{(245)},\lambda_2^{(245)}\}$, $\cdots$, $\{\lambda_1^{(248)},\lambda_2^{(248)}\}$.
The first group corresponds to the rigged configurations $(4,0)$, $\cdots$, $(4,4)$
and the second group corresponds to $(3,0)$, $\cdots$, $(3,3)$, respectively.
\begin{center}
\unitlength 12pt
\begin{picture}(19,6)
\put(9,0){\vector(0,1){6}}
\multiput(0,1)(0,1){5}{\vector(1,0){18}}
\put(8.152,5){\circle*{0.3}}
\put(1.79,5){\circle*{0.3}}
\put(8.128,4){\circle*{0.3}}
\put(5.49,4){\circle*{0.3}}
\put(8.105,3){\circle*{0.3}}
\put(6.74,3){\circle*{0.3}}
\put(8.086,2){\circle*{0.3}}
\put(7.38,2){\circle*{0.3}}
\put(8.072,1){\circle*{0.3}}
\put(7.78,1){\circle*{0.3}}
\multiput(1,0.12)(1,0){8}{
\multiput(0,0)(0,0.2){30}{\circle*{0.07}}
}
\multiput(10,0.12)(1,0){8}{
\multiput(0,0)(0,0.2){30}{\circle*{0.07}}
}
\end{picture}
\begin{picture}(18,6)(0,-1)
\put(9,0){\vector(0,1){5}}
\multiput(0,1)(0,1){4}{\vector(1,0){18}}
\put(7.91,4){\circle*{0.3}}
\put(1.82,4){\circle*{0.3}}
\put(7.87,3){\circle*{0.3}}
\put(5.52,3){\circle*{0.3}}
\put(7.84,2){\circle*{0.3}}
\put(6.76,2){\circle*{0.3}}
\put(7.81,1){\circle*{0.3}}
\put(7.40,1){\circle*{0.3}}
\multiput(1,0.12)(1,0){8}{
\multiput(0,0)(0,0.2){25}{\circle*{0.07}}
}
\multiput(10,0.12)(1,0){8}{
\multiput(0,0)(0,0.2){25}{\circle*{0.07}}
}
\end{picture}
\end{center}

The left one of the following diagrams shows the solutions
$\{\lambda_1^{(249)},\lambda_2^{(249)}\}$, $\{\lambda_1^{(250)},\lambda_2^{(250)}\}$
and $\{\lambda_1^{(251)},\lambda_2^{(251)}\}$
and the right diagram shows the solutions
$\{\lambda_1^{(252)},\lambda_2^{(252)}\}$ and $\{\lambda_1^{(253)},\lambda_2^{(253)}\}$.
The first group corresponds to the rigged configurations $(2,0)$, $(2,1)$ and $(2,2)$
and the second group corresponds to $(1,0)$ and $(1,1)$, respectively.
\begin{center}
\unitlength 12pt
\begin{picture}(19,4)
\put(9,0){\vector(0,1){4}}
\multiput(0,1)(0,1){3}{\vector(1,0){18}}
\put(7.58,3){\circle*{0.3}}
\put(1.86,3){\circle*{0.3}}
\put(7.53,2){\circle*{0.3}}
\put(5.56,2){\circle*{0.3}}
\put(7.47,1){\circle*{0.3}}
\put(6.80,1){\circle*{0.3}}
\multiput(1,0.12)(1,0){8}{
\multiput(0,0)(0,0.2){20}{\circle*{0.07}}
}
\multiput(10,0.12)(1,0){8}{
\multiput(0,0)(0,0.2){20}{\circle*{0.07}}
}
\end{picture}
\begin{picture}(18,4)(0,-1)
\put(9,0){\vector(0,1){3}}
\multiput(0,1)(0,1){2}{\vector(1,0){18}}
\put(7.12,2){\circle*{0.3}}
\put(1.92,2){\circle*{0.3}}
\put(7.00,1){\circle*{0.3}}
\put(5.65,1){\circle*{0.3}}
\multiput(1,0.12)(1,0){8}{
\multiput(0,0)(0,0.2){15}{\circle*{0.07}}
}
\multiput(10,0.12)(1,0){8}{
\multiput(0,0)(0,0.2){15}{\circle*{0.07}}
}
\end{picture}
\end{center}

The left one of the following diagrams shows the solution $\{\lambda_1^{(254)},\lambda_2^{(254)}\}$
and the right diagram shows the solution $\{\lambda_1^{(255)},\lambda_2^{(255)}\}$.
The first solution corresponds to the rigged configuration $(0,0)$ and
the second solution corresponds to the exceptional real solution.
\begin{center}
\unitlength 12pt
\begin{picture}(19,2)
\put(9,0){\vector(0,1){2}}
\multiput(0,1)(0,1){1}{\vector(1,0){18}}
\put(6.34,1){\circle*{0.3}}
\put(2.06,1){\circle*{0.3}}
\multiput(1,0.12)(1,0){8}{
\multiput(0,0)(0,0.2){10}{\circle*{0.07}}
}
\multiput(10,0.12)(1,0){8}{
\multiput(0,0)(0,0.2){10}{\circle*{0.07}}
}
\end{picture}
\begin{picture}(18,2)
\put(9,0){\vector(0,1){2}}
\multiput(0,1)(0,1){1}{\vector(1,0){18}}
\put(4.40,1){\circle*{0.3}}
\put(2.92,1){\circle*{0.3}}
\multiput(1,0.12)(1,0){8}{
\multiput(0,0)(0,0.2){10}{\circle*{0.07}}
}
\multiput(10,0.12)(1,0){8}{
\multiput(0,0)(0,0.2){10}{\circle*{0.07}}
}
\end{picture}
\end{center}

Some readers might feel that the identification of the solution $\{\lambda_1^{(255)},\lambda_2^{(255)}\}$
with the exceptional real solution is somewhat tricky.
In this case, we introduce another order based on the values of the smaller roots $\lambda_1^{(i)}$.
Then the solution $\{\lambda_1^{(255)},\lambda_2^{(255)}\}$ naturally appears
as the exceptional real solution.

Next, let us consider the complex solutions.
Among 21 such solutions, there is the solution $\{i/2,-i/2\}$.
Since this solution does not satisfy the condition (\ref{eq:NepomechieWangCriterion}), this is unphysical.
Therefore we have to consider the remaining 20 complex solutions
together with the 2 real solutions detected in the previous step.
We show these 22 solutions on the complex plain as follows
(the spacing of the dotted lines is 0.5).
\begin{center}
\unitlength 12pt
\begin{picture}(36,8)
\put(18,0){\vector(0,1){8}}
\put(0,4){\vector(1,0){36}}
\put(30.2,4){\circle*{0.3}}
\put(27.21,4){\circle*{0.3}}
\put(8.79,4){\circle*{0.3}}
\put(5.8,4){\circle*{0.3}}
\put(2.9,0.7){\circle*{0.3}}
\put(2.9,7.3){\circle*{0.3}}
\put(10.79,1.88){\circle*{0.3}}
\put(10.79,6.12){\circle*{0.3}}
\put(12.48,2.02){\circle*{0.3}}
\put(12.48,5.98){\circle*{0.3}}
\put(13.74,2.00){\circle*{0.3}}
\put(13.74,6.00){\circle*{0.3}}
\put(14.69,2.00){\circle*{0.3}}
\put(14.69,6.00){\circle*{0.3}}
\put(15.46,2.00){\circle*{0.3}}
\put(15.46,6.00){\circle*{0.3}}
\put(16.12,2.00){\circle*{0.3}}
\put(16.12,6.00){\circle*{0.3}}
\put(16.70,2.00){\circle*{0.3}}
\put(16.70,6.00){\circle*{0.3}}
\put(17.24,2.00){\circle*{0.3}}
\put(17.24,6.00){\circle*{0.3}}
\put(17.75,2.00){\circle*{0.3}}
\put(17.75,6.00){\circle*{0.3}}
\put(18.25,2.00){\circle*{0.3}}
\put(18.25,6.00){\circle*{0.3}}
\put(18.76,2.00){\circle*{0.3}}
\put(18.76,6.00){\circle*{0.3}}
\put(19.30,2.00){\circle*{0.3}}
\put(19.30,6.00){\circle*{0.3}}
\put(19.88,2.00){\circle*{0.3}}
\put(19.88,6.00){\circle*{0.3}}
\put(20.54,2.00){\circle*{0.3}}
\put(20.54,6.00){\circle*{0.3}}
\put(21.31,2.00){\circle*{0.3}}
\put(21.31,6.00){\circle*{0.3}}
\put(22.26,2.00){\circle*{0.3}}
\put(22.26,6.00){\circle*{0.3}}
\put(23.52,2.02){\circle*{0.3}}
\put(23.52,5.98){\circle*{0.3}}
\put(25.21,1.88){\circle*{0.3}}
\put(25.21,6.12){\circle*{0.3}}
\put(33.1,0.7){\circle*{0.3}}
\put(33.1,7.3){\circle*{0.3}}
\multiput(2,0.24)(2,0){8}{
\multiput(0,0)(0,0.2){40}{\circle*{0.07}}
}
\multiput(20,0.24)(2,0){8}{
\multiput(0,0)(0,0.2){40}{\circle*{0.07}}
}
\multiput(0,2)(0.2,0){180}{\circle*{0.07}}
\multiput(0,6)(0.2,0){180}{\circle*{0.07}}
\end{picture}
\end{center}
Here the complex solutions take the form of 2-strings:
\[
\lambda_1=a+\left(\frac{1}{2}+\delta\right)i,\qquad
\lambda_2=a-\left(\frac{1}{2}+\delta\right)i
\]
where $a,\delta\in\mathbb{R}$.
These solutions should correspond to the rigged configurations
\begin{center}
\unitlength 12pt
\begin{picture}(4.5,1)
\put(0,0.1){21}
\put(1.2,0){\Yboxdim12pt\yng(2)}
\put(3.5,0.1){$r$}
\end{picture}
\end{center}
where $0\leq r\leq 21$.
Again, we can construct a correspondence based on the real parts of $\{\lambda_1,\lambda_2\}$
without any ambiguity.
In particular, the exceptional real solutions corresponds to the following rigged configurations:
\begin{center}
\unitlength 12pt
\begin{picture}(10,1)
\put(0,0.1){21}
\put(1.2,0){\Yboxdim12pt\yng(2)}
\put(3.5,0.1){1}
\put(6,0){
\put(0,0.1){21}
\put(1.2,0){\Yboxdim12pt\yng(2)}
\put(3.5,0.1){20}
}
\end{picture}
\end{center}

Let us compare the above analysis with Appendix of \cite{EKS:1992}.
For the comparison we have to divide their $\Lambda^1_1$ and $\Lambda^1_2$ by 2.
In their Table 2, they consider real solutions.
In our analysis, the two row rigged configurations $(r_1,r_2)$ corresponding to their solutions
are $(11,10)$, $(12,9)$, $(13,8)$, $(14,7)$, $(15,6)$, $(16,5)$, $(17,4)$, $(18,3)$,
$(19,2)$, $(20,1)$ and $(21,0)$ from top to bottom of their table.\footnote{We expect there is a typo at
line 2 from bottom of their table.}
In their Table 3, they consider the complex solutions.
According to our analysis, their solutions correspond to the one row rigged configurations
of the riggings $21,19,18,17\ldots,4,3,2,0$ from top to bottom of their table.
In Table 4, they consider real solutions.
According to our analysis, their solutions correspond to the two row rigged configurations
with the riggings $(21,21)$, $(21,20)$, $(21,19)$, $(21,18)$, $(21,17)$, $(21,16)$,
$(21,15)$, $(21,14)$, $(21,13)$, $(21,12)$, $(21,11)$, $(21,0)$, $(21,1)$, $(21,2)$, $(21,3)$,
$(21,4)$, $(21,5)$, $(21,6)$, $(21,7)$, $(21,8)$, $(21,9)$, $(21,10)$ from top to bottom of their table.
Therefore, under a suitable transformation, our analysis agrees with their Tables 2, 3 and 4
based on the Bethe--Takahashi quantum number.
However our interpretation of the exceptional real solutions are different from
their interpretation given in Table 1 of their paper.
A possible reasoning for this discrepancy might be the following.
In the derivation of the Bethe--Takahashi quantum number \cite{Takahashi},
we assume that the solutions take the form of a $(2b+1)$-string:
\[
a+bi,\,a+(b-1)i,\,a+(b-2)i,\,\ldots,\,a-bi,\qquad
(a\in\mathbb{R},\,b\in\mathbb{Z}_{\geq 0}/2).
\]
However the exceptional real solutions take very different form from the standard 2-strings.

According to \cite{EKS:1992} we have additional exceptional real solutions when $N>61.34$.
We expect we can analyze such cases in a similar way.
However currently I am unable to construct complete solutions for such cases.

\appendix

\section{Proofs for the assertions in Section \ref{sec:singular_solution}}
\label{sec:appendix}
Since the proofs for the assertions in Section \ref{sec:singular_solution}
are scattered around several places, we believe it is worthwhile to provide
proofs for the relevant assertions in a systematic way.

Recall that from the relation of the transfer matrix $T_N(\lambda)$ with the $R$-matrices
(known as the $RTT=TTR$ relation),
one can derive the commutation relations between the fundamental operators
$A_N$, $B_N$ and $D_N$ (see \cite{Faddeev,KorepinBook}):
\begin{align}
\label{eq:Bethe_relations1}
&[B_N(\lambda),B_N(\mu)]=0,\\
&\label{eq:Bethe_relations2}
A_N(\lambda)B_N(\mu)=\frac{\lambda-\mu-i}{\lambda-\mu}B_N(\mu)A_N(\lambda)
+\frac{i}{\lambda-\mu}B_N(\lambda)A_N(\mu),\\
&\label{eq:Bethe_relations3}
D_N(\lambda)B_N(\mu)=\frac{\lambda-\mu+i}{\lambda-\mu}B_N(\mu)D_N(\lambda)
-\frac{i}{\lambda-\mu}B_N(\lambda)D_N(\mu).
\end{align}

Let
\begin{align}
\Psi^{(\epsilon)}_\lambda:=\frac{1}{\epsilon^N}
B_N\!\left(\frac{i}{2}+\epsilon+c\,\epsilon^N\right)B_N\!\left(-\frac{i}{2}+\epsilon\right)
B_N(\lambda_3)\cdots B_N(\lambda_\ell)
|0\rangle_N.
\end{align}

\begin{lemma}
The vector $\lim_{\epsilon\rightarrow 0}\Psi^{(\epsilon)}_\lambda$ is well-defined.
\end{lemma}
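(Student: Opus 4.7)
The plan is to show that the unnormalized vector
\begin{equation*}
\phi(\epsilon) := B_N\!\left(\tfrac{i}{2}+\epsilon+c\,\epsilon^N\right)B_N\!\left(-\tfrac{i}{2}+\epsilon\right)B_N(\lambda_3)\cdots B_N(\lambda_\ell)|0\rangle_N
\end{equation*}
is a polynomial in $\epsilon$ that vanishes to order at least $N$ at $\epsilon=0$, so the quotient $\phi(\epsilon)/\epsilon^N=\Psi^{(\epsilon)}_\lambda$ automatically admits a finite limit. Polynomiality is immediate because each entry of $T_N(\mu)$, and in particular $B_N(\mu)$, is a polynomial in $\mu$. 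It remains to check that the Taylor coefficients $\phi_0,\phi_1,\ldots,\phi_{N-1}$ all vanish. A useful preliminary observation is that $(\epsilon+c\epsilon^N)^j=\epsilon^j+O(\epsilon^{N+j-1})$, so the $c$-correction does not influence $\phi_k$ for $k<N$; hence these coefficients depend only on the underlying singular solution $\{i/2,-i/2,\lambda_3,\ldots,\lambda_\ell\}$.

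For $\phi_0 = B_N(i/2)B_N(-i/2)B_N(\lambda_3)\cdots B_N(\lambda_\ell)|0\rangle_N$ I would use the identity $L_k(i/2)=iP_{0k}$ (with $P_{0k}$ the permutation of the auxiliary space with the $k$-th physical site), together with the analogous degenerate structure at $-i/2$, to reduce $B_N(\pm i/2)|0\rangle_N$ to explicit one-magnon states; a direct calculation then recovers the familiar fact that a singular solution produces the zero Bethe vector. The intermediate coefficients $\phi_1,\ldots,\phi_{N-1}$ I would control by induction on $\ell$, the base case $\ell=2$ being a straightforward extension to arbitrary $N$ of the Example at $N=4$. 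For the inductive step one commutes the outer $B_N(\lambda_\ell)$ across the remaining product using (\ref{eq:Bethe_relations2})--(\ref{eq:Bethe_relations3}) and Taylor-expands the residues generated by the $\frac{i}{\lambda-\mu}$ poles at $\lambda=\pm i/2+\epsilon$. Exploiting the pseudovacuum identities $A_N(\mu)|0\rangle_N=(\mu+i/2)^N|0\rangle_N$ and $D_N(\mu)|0\rangle_N=(\mu-i/2)^N|0\rangle_N$, the accumulated low-order $\epsilon$-coefficients rearrange into the two sides of the Bethe ansatz equation for $\lambda_\ell$; their equality modulo $O(\epsilon^N)$ is precisely the Nepomechie--Wang criterion (\ref{eq:NepomechieWangCriterion}), so the coefficients cancel.

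The main obstacle, as is typical for singular limits in the algebraic Bethe ansatz, is the bookkeeping of residues: at each of the $N-1$ intermediate orders one must separately account for contributions from $B_N(\lambda_\ell)$ colliding with either of the two singular parameters $i/2+\epsilon+c\epsilon^N$ and $-i/2+\epsilon$, and the resulting terms must be resummed before the cancellation is visible. A clean way to organize this is to recognize that the required $\epsilon^N$ vanishing is produced by the factor $(\mu-i/2)^N$ from $D_N(\mu)|0\rangle_N$ evaluated at $\mu=i/2+\epsilon+c\epsilon^N$, together with the factor $(\mu+i/2)^N$ from $A_N(\mu)|0\rangle_N$ evaluated at $\mu=-i/2+\epsilon$. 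Once one shows that the lower-order residues assemble into the scalar $1-\bigl(-\prod_{j=3}^\ell(\lambda_j+\tfrac{i}{2})/(\lambda_j-\tfrac{i}{2})\bigr)^N$ from (\ref{eq:NepomechieWangCriterion}), which vanishes by assumption, the lemma follows; the specific value of $c$ is not required for this statement and enters only in the identification of $\lim_{\epsilon\to 0}\Psi^{(\epsilon)}_\lambda$ with an eigenvector of $\mathcal{H}_N$, which is a subsequent matter.
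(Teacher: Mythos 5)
Your framing (polynomiality in $\epsilon$ plus vanishing of the coefficients $\phi_0,\ldots,\phi_{N-1}$, and the observation that the factors $B_N(\lambda_3)\cdots B_N(\lambda_\ell)$ are $\epsilon$-independent and therefore harmless) is sound and matches the paper's first reduction. But the core of your argument misplaces the difficulty and relies on a mechanism that does not apply. The entire content of the lemma is the single estimate $B_N\bigl(\tfrac{i}{2}+\epsilon+c\,\epsilon^N\bigr)B_N\bigl(-\tfrac{i}{2}+\epsilon\bigr)|0\rangle_N=O(\epsilon^N)$ for \emph{all} $N$, which involves neither the roots $\lambda_3,\ldots,\lambda_\ell$ nor the criterion (\ref{eq:NepomechieWangCriterion}). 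You dismiss this as ``a straightforward extension to arbitrary $N$ of the Example at $N=4$,'' i.e.\ you assume the base case of your induction on $\ell$ is easy; it is exactly where the proof lives. The paper establishes it by induction on the chain length $N$, writing $T_N(\lambda)=L_N(\lambda)T_{N-1}(\lambda)$ so that $B_N(\lambda)=a_N(\lambda)B_{N-1}(\lambda)+b_N(\lambda)D_{N-1}(\lambda)$, expanding the two-magnon product into four terms, and using one application of (\ref{eq:Bethe_relations3}) together with $D_{N-1}(\lambda_1)|0\rangle_{N-1}=(\lambda_1-\tfrac{i}{2})^{N-1}|0\rangle_{N-1}$ to exhibit the order-$\epsilon^N$ cancellation. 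Nothing in your proposal supplies this step.

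The inductive step you do describe cannot work as stated: relations (\ref{eq:Bethe_relations2})--(\ref{eq:Bethe_relations3}) move $A_N$ or $D_N$ past $B_N$, but $\phi(\epsilon)$ is a pure product of commuting $B_N$'s applied to the vacuum, so there is no $A_N$ or $D_N$ to commute and no residues at $\lambda=\pm\tfrac{i}{2}+\epsilon$ are generated. The pseudovacuum eigenvalues $(\mu\pm\tfrac{i}{2})^N$ and the scalar built from $-\prod_{j\geq 3}(\lambda_j+\tfrac{i}{2})/(\lambda_j-\tfrac{i}{2})$ appear only when one acts with $\tau_N(\lambda)$ on the Bethe vector; they belong to the \emph{subsequent} eigenvector statement (Lemma \ref{lem:asymptotics} and the Theorem), not to the well-definedness of $\lim_{\epsilon\to 0}\Psi^{(\epsilon)}_\lambda$. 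Consequently, making the lemma conditional on (\ref{eq:NepomechieWangCriterion}) is both unnecessary and incorrect: the $O(\epsilon^N)$ vanishing holds for arbitrary $\lambda_3,\ldots,\lambda_\ell$ and arbitrary $c$, as the paper's $N$-induction shows. To repair your write-up you would need to replace the $\ell$-induction and the appeal to the criterion by an actual proof of the two-magnon estimate for general $N$.
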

\begin{proof}
Since $B_N$ operators commute with each other, it is enough to prove
\begin{align}
\label{eq:NepomechieWangAppendix_main}
B_N\!\left(\frac{i}{2}+\epsilon+c\,\epsilon^N\right)B_N\!\left(-\frac{i}{2}+\epsilon\right)
|0\rangle_N
\sim\epsilon^N
\end{align}
where $\sim$ means the order of the leading term in the expansion of $\epsilon$.
Our argument is similar to that given in \cite[Appendix A]{NepomechieWang2013}.

The proof is by induction on $N$.
As the starting point, $N=4$ case is checked explicitly as in (\ref{eq:NepomechieWang_N=4}).
Suppose that we have checked (\ref{eq:NepomechieWangAppendix_main}) for $N-1$ case.
Let us denote
\[
|0\rangle_N=|0\rangle_{N-1}\otimes
\left(\!
\begin{array}{c}
1\\ 0
\end{array}
\!\right)_{N}.
\]
From the definition of the transfer matrix at (\ref{def:transfer_matrix}), we see
that $T_N$ and $T_{N-1}$ are related as
\[
T_N(\lambda)=L_N(\lambda)T_{N-1}(\lambda).
\]
By definition of $L_N(\lambda)$ at (\ref{eq:def_L_of_Bethe}), we have
\[
\left(
\begin{array}{cc}
A_N(\lambda) & B_N(\lambda)\\
C_N(\lambda) & D_N(\lambda)
\end{array}
\right)
=
\left(
\begin{array}{cc}
a_N(\lambda) & b_N(\lambda)\\
c_N(\lambda) & d_N(\lambda)
\end{array}
\right)
\left(
\begin{array}{cc}
A_{N-1}(\lambda) & B_{N-1}(\lambda)\\
C_{N-1}(\lambda) & D_{N-1}(\lambda)
\end{array}
\right)
\]
where the operators $a_N(\lambda)$, $b_N(\lambda)$, $c_N(\lambda)$ and $d_N(\lambda)$
act on
$\left(\!
\begin{array}{c}
1\\ 0
\end{array}
\!\right)_{N}$
as
\begin{align*}
a_N(\lambda)&=
\left(
\begin{array}{cc}
\lambda+\frac{i}{2} & 0\\
0 & \lambda-\frac{i}{2}
\end{array}
\right),
&
b_N(\lambda)&=
\left(
\begin{array}{cc}
0 & 0\\
i & 0
\end{array}
\right),\\
c_N(\lambda)&=
\left(
\begin{array}{cc}
0 & i\\
0 & 0
\end{array}
\right),
&
d_N(\lambda)&=
\left(
\begin{array}{cc}
\lambda-\frac{i}{2} & 0\\
0 & \lambda+\frac{i}{2}
\end{array}
\right)
\end{align*}
and the operators $A_{N-1}(\lambda)$, $B_{N-1}(\lambda)$, $C_{N-1}(\lambda)$
and $D_{N-1}(\lambda)$ act on $|0\rangle_{N-1}$.
In particular,
\[
B_N(\lambda)=
a_N(\lambda)B_{N-1}(\lambda)+b_N(\lambda)D_{N-1}(\lambda).
\]
It follows that
\begin{align}
\nonumber
&B_N(\lambda_1)B_N(\lambda_2)|0\rangle_N\\
\nonumber
&=\left\{
a_N(\lambda_1)B_{N-1}(\lambda_1)+b_N(\lambda_1)D_{N-1}(\lambda_1)
\right\}\\
\nonumber
&\quad\times
\left\{
a_N(\lambda_2)B_{N-1}(\lambda_2)+b_N(\lambda_2)D_{N-1}(\lambda_2)
\right\}|0\rangle_N\\
\nonumber
&=\{ a_N(\lambda_1)a_N(\lambda_2)B_{N-1}(\lambda_1)B_{N-1}(\lambda_2)\\
\nonumber
&\quad
+a_N(\lambda_1)b_N(\lambda_2)B_{N-1}(\lambda_1)D_{N-1}(\lambda_2)\\
\nonumber
&\quad
+b_N(\lambda_1)a_N(\lambda_2)D_{N-1}(\lambda_1)B_{N-1}(\lambda_2)\\
\label{eq:NepomechieWangAppendix}
&\quad
+b_N(\lambda_1)b_N(\lambda_2)D_{N-1}(\lambda_1)D_{N-1}(\lambda_2)
\}|0\rangle_N
\end{align}
for $\lambda_1$, $\lambda_2$ arbitrary.

We now set $\lambda_1=\frac{i}{2}+\epsilon+c\,\epsilon^N$
and $\lambda_2=-\frac{i}{2}+\epsilon$, and we consider the four terms
of the right hand side of (\ref{eq:NepomechieWangAppendix}) respectively.
In the first term, we have
\[
B_{N-1}(\lambda_1)B_{N-1}(\lambda_2)|0\rangle_{N-1}\sim\epsilon^{N-1}
\]
by the induction hypothesis, and we have
\[
a_N(\lambda_1)a_N(\lambda_2)
\left(\!
\begin{array}{c}
1\\ 0
\end{array}
\!\right)_{N}
=
\left(\!
\begin{array}{cc}
\epsilon(i+\epsilon+c\,\epsilon^N)&0\\
0&(\epsilon+c\,\epsilon^N)(-i+\epsilon)
\end{array}
\!\right)
\left(\!
\begin{array}{c}
1\\ 0
\end{array}
\!\right)_{N}
\sim\epsilon .
\]
Hence the first term on the right hand side of (\ref{eq:NepomechieWangAppendix})
is of order $\epsilon^N$.

The fourth term on the right hand side of (\ref{eq:NepomechieWangAppendix})
is zero since we have
\[
b_N(\lambda_1)b_N(\lambda_2)
\left(\!
\begin{array}{c}
1\\ 0
\end{array}
\!\right)_{N}=0.
\]

From (\ref{eq:Bethe_relations3}), we have
\begin{align*}
D_{N-1}(\lambda_1)B_{N-1}(\lambda_2)&
=\frac{\lambda_1-\lambda_2+i}{\lambda_1-\lambda_2}B_{N-1}(\lambda_2)D_{N-1}(\lambda_1)
-\frac{i}{\lambda_1-\lambda_2}B_{N-1}(\lambda_1)D_{N-1}(\lambda_2)\\
&=
\frac{2i+c\,\epsilon^N}{i+c\,\epsilon^N}B_{N-1}(\lambda_2)D_{N-1}(\lambda_1)
-\frac{i}{i+c\,\epsilon^N}B_{N-1}(\lambda_1)D_{N-1}(\lambda_2).
\end{align*}
We apply this relation to the third term on the right hand side of (\ref{eq:NepomechieWangAppendix}).
Then the combination of the second term and the third term is
\begin{align}
\nonumber
&\left\{
a_N(\lambda_1)b_N(\lambda_2)-\frac{i}{i+c\,\epsilon^N}b_N(\lambda_1)a_N(\lambda_2)
\right\}
B_{N-1}(\lambda_1)D_{N-1}(\lambda_2)|0\rangle_N\\
\label{eq:NepomechieWangAppendix2}
&+\frac{2i+c\,\epsilon^N}{i+c\,\epsilon^N}b_N(\lambda_1)a_N(\lambda_2)
B_{N-1}(\lambda_2)D_{N-1}(\lambda_1)|0\rangle_N.
\end{align}
The first term of (\ref{eq:NepomechieWangAppendix2}) is
\begin{align*}
&\left\{
a_N(\lambda_1)b_N(\lambda_2)-\frac{i}{i+c\,\epsilon^N}b_N(\lambda_1)a_N(\lambda_2)
\right\}
\left(\!
\begin{array}{c}
1\\ 0
\end{array}
\!\right)_{N}
=
\left(
\begin{array}{cc}
0&0\\
\frac{ic\,\epsilon^{N+1}}{i+c\,\epsilon^N}+ic\,\epsilon^N&0
\end{array}
\right)
\left(\!
\begin{array}{c}
1\\ 0
\end{array}
\!\right)_{N}
\sim\epsilon^N.
\end{align*}
Also there is no divergence in $B_{N-1}(\lambda_1)D_{N-1}(\lambda_2)|0\rangle_{N-1}$.
Thus the first line of (\ref{eq:NepomechieWangAppendix2}) is of order $\epsilon^N$.
In the second line of (\ref{eq:NepomechieWangAppendix2}),
we have
\[
D_{N-1}(\lambda_1)|0\rangle_{N-1}
=\left(
\lambda_1-\frac{i}{2}
\right)^{N-1}|0\rangle_{N-1}
\sim\epsilon^{N-1}
\]
(the first equality follows from the definition of the operator $D_N$, see \cite{Faddeev,KorepinBook}),
and we have
\begin{align*}
\frac{2i+c\,\epsilon^N}{i+c\,\epsilon^N}b_N(\lambda_1)a_N(\lambda_2)
\left(\!
\begin{array}{c}
1\\ 0
\end{array}
\!\right)_{N}
=
\frac{2i+c\,\epsilon^N}{i+c\,\epsilon^N}
\left(
\begin{array}{cc}
0&0\\
i\epsilon&0
\end{array}
\right)
\left(\!
\begin{array}{c}
1\\ 0
\end{array}
\!\right)_{N}
\sim\epsilon.
\end{align*}
Thus the second line of (\ref{eq:NepomechieWangAppendix2}) is also of order $\epsilon^N$.

Summing up, we obtain the relation (\ref{eq:NepomechieWangAppendix_main}).
\end{proof}

Now we need to show that the vector $\lim_{\epsilon\rightarrow 0}\Psi^{(\epsilon)}_\lambda$
is an eigenvector of the Hamiltonian $\mathcal{H}_N$.
Recall the following standard relations (see \cite{Faddeev,KorepinBook})
\begin{align}
\nonumber
&\{A_N(\lambda)+D_N(\lambda)\}B_N(\lambda_1)\cdots B_N(\lambda_\ell)|0\rangle_N\\
&=
\Lambda(\lambda;\lambda_1,\cdots,\lambda_\ell)\prod_{j=1}^\ell B_N(\lambda_j)|0\rangle_N
+\sum_{k=1}^\ell\biggl\{
\Lambda_k(\lambda;\lambda_1,\cdots,\lambda_\ell)B_N(\lambda)
\prod_{j=1\atop j\neq k}^\ell B_N(\lambda_j)|0\rangle_N\biggr\},
\end{align}
where
\begin{align}
\Lambda(\lambda;\lambda_1,\cdots,\lambda_\ell)=
\left(
\lambda+\frac{i}{2}
\right)^N
\prod_{j=1}^\ell \frac{\lambda-\lambda_j-i}{\lambda-\lambda_j}+
\left(
\lambda-\frac{i}{2}
\right)^N
\prod_{j=1}^\ell \frac{\lambda_j-\lambda-i}{\lambda_j-\lambda}
\end{align}
and for $k=1,2,\ldots,k$ we have
\begin{align}
\Lambda_k(\lambda;\lambda_1,\cdots,\lambda_\ell)=
\frac{i}{\lambda-\lambda_k}
\Biggl\{
\left(
\lambda_k+\frac{i}{2}
\right)^N
\prod_{j=1\atop j\neq k}^\ell \frac{\lambda_k-\lambda_j-i}{\lambda_k-\lambda_j}-
\left(
\lambda_k-\frac{i}{2}
\right)^N
\prod_{j=1\atop j\neq k}^\ell \frac{\lambda_j-\lambda_k-i}{\lambda_j-\lambda_k}
\Biggr\}.
\end{align}
As in the standard arguments, we can show
$\Lambda_k(\lambda;\lambda_1,\cdots,\lambda_\ell)=0$
for $k=3,\ldots,\ell$ by the Bethe ansatz equations.
Therefore we need to check the following cases.
\begin{lemma}\label{lem:asymptotics}
Let $\lambda_1=\frac{i}{2}+\epsilon+c\,\epsilon^N$ and $\lambda_2=-\frac{i}{2}+\epsilon$.

(a) If we take
\begin{align}\label{eq:c1_NepomechieWang}
c=-\frac{2}{i^{N+1}}\prod^\ell_{j=3}\frac{\lambda_j-\frac{3i}{2}}{\lambda_j+\frac{i}{2}},
\end{align}
then we have $\Lambda_1(\lambda;\lambda_1,\cdots,\lambda_\ell)=0$.

(b) If we take
\begin{align}\label{eq:c2_NepomechieWang}
c=2i^{N+1}\prod^\ell_{j=3}\frac{\lambda_j+\frac{3i}{2}}{\lambda_j-\frac{i}{2}},
\end{align}
then we have $\Lambda_2(\lambda;\lambda_1,\cdots,\lambda_\ell)=0$.
\end{lemma}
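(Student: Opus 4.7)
My plan is to prove both parts by a direct order-by-order expansion in the small parameter $\epsilon$. For part (a) I substitute $\lambda_1=\frac{i}{2}+\epsilon+c\epsilon^N$ and $\lambda_2=-\frac{i}{2}+\epsilon$ into the explicit formula for $\Lambda_1(\lambda;\lambda_1,\ldots,\lambda_\ell)$ stated above and identify the leading $\epsilon$-behaviour of every factor. The factors $(\lambda_1+\frac{i}{2})^N$ and $\frac{\lambda_2-\lambda_1-i}{\lambda_2-\lambda_1}=\frac{2i+c\epsilon^N}{i+c\epsilon^N}$ tend to the finite constants $i^N$ and $2$; the two products over $j=3,\ldots,\ell$ tend, respectively, to $\prod_{j=3}^\ell \frac{\lambda_j+i/2}{\lambda_j-i/2}$ and $\prod_{j=3}^\ell \frac{\lambda_j-3i/2}{\lambda_j-i/2}$; finally $(\lambda_1-\frac{i}{2})^N=(\epsilon+c\epsilon^N)^N$ and $\frac{\lambda_1-\lambda_2-i}{\lambda_1-\lambda_2}=\frac{c\epsilon^N}{i+c\epsilon^N}$ both vanish to order $\epsilon^N$, with leading coefficients $1$ and $-ic$ respectively (using $1/i=-i$).

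Combining these expansions, both summands inside the curly brace of $\Lambda_1$ are of order $\epsilon^N$, with leading coefficients
$$-i^{N+1}c\prod_{j=3}^\ell \frac{\lambda_j+\tfrac{i}{2}}{\lambda_j-\tfrac{i}{2}} \quad\text{and}\quad 2\prod_{j=3}^\ell \frac{\lambda_j-\tfrac{3i}{2}}{\lambda_j-\tfrac{i}{2}}.$$
The overall prefactor $\frac{i}{\lambda-\lambda_1}$ is finite for generic $\lambda$, so $\Lambda_1$ is itself of order $\epsilon^N$ and the requirement that its leading coefficient cancel reads $i^{N+1}c\prod \frac{\lambda_j+i/2}{\lambda_j-i/2}+2\prod \frac{\lambda_j-3i/2}{\lambda_j-i/2}=0$. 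Solving this for $c$ immediately yields formula (\ref{eq:c1_NepomechieWang}), which proves part (a).

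Part (b) is an essentially symmetric computation on $\Lambda_2$. Now $(\lambda_2+\frac{i}{2})^N=\epsilon^N$ and $\frac{\lambda_1-\lambda_2-i}{\lambda_1-\lambda_2}=\frac{c\epsilon^N}{i+c\epsilon^N}$ are the factors that vanish, while $(\lambda_2-\frac{i}{2})^N\to(-i)^N$ and $\frac{\lambda_2-\lambda_1-i}{\lambda_2-\lambda_1}\to 2$; the $j\geq 3$ products tend to $\prod \frac{\lambda_j+3i/2}{\lambda_j+i/2}$ and $\prod \frac{\lambda_j-i/2}{\lambda_j+i/2}$. Matching the leading $\epsilon^N$ coefficients in the same way gives $c=\frac{2}{(-i)^{N+1}}\prod_{j=3}^\ell \frac{\lambda_j+3i/2}{\lambda_j-i/2}$, which coincides with (\ref{eq:c2_NepomechieWang}) after invoking the elementary identity $(-i)^{N+1}i^{N+1}=(-i\cdot i)^{N+1}=1$, i.e.\ $(-i)^{-(N+1)}=i^{N+1}$.

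There is no real conceptual obstacle: the entire lemma reduces to bookkeeping of a single power of $\epsilon$ in a product of meromorphic functions. The only points that require care are the sign conventions and the identity $(-i)^{-(N+1)}=i^{N+1}$ used in part (b); these are the places where a careless expansion would deliver a formally correct but superficially mismatched answer, and they should be flagged explicitly to make the stated formulas recognisably reappear.
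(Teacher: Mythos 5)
Your proposal is correct and follows essentially the same route as the paper's own proof: substitute the regularized values of $\lambda_1,\lambda_2$ into the explicit formula for $\Lambda_k$, observe that both summands inside the braces are of order $\epsilon^N$, and choose $c$ so that the two leading coefficients cancel; your sign bookkeeping and the identity $(-i)^{-(N+1)}=i^{N+1}$ reproduce exactly the stated values of $c$ in (\ref{eq:c1_NepomechieWang}) and (\ref{eq:c2_NepomechieWang}). If anything, your explicit tracking of leading coefficients is slightly more careful than the paper's, which writes the leading-order approximations as equalities.
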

\begin{proof}
(a) We have
\begin{align*}
\Lambda_1&=
\frac{i}{\lambda-\lambda_1}
\Biggl\{
\left(
\lambda_1+\frac{i}{2}
\right)^N
\prod_{j=2}^\ell \frac{\lambda_1-\lambda_j-i}{\lambda_1-\lambda_j}-
\left(
\lambda_1-\frac{i}{2}
\right)^N
\prod_{j=2}^\ell \frac{\lambda_j-\lambda_1-i}{\lambda_j-\lambda_1}
\Biggr\}\\
&=
\frac{i}{\lambda-\lambda_1}
\Biggl\{
i^N\cdot\frac{c\,\epsilon^N}{i}
\prod_{j=3}^\ell \frac{\frac{i}{2}-\lambda_j-i}{\frac{i}{2}-\lambda_j}-
\epsilon^N\cdot\frac{-2i}{-i}
\prod_{j=3}^\ell \frac{\lambda_j-\frac{i}{2}-i}{\lambda_j-\frac{i}{2}}
\Biggr\}\\
&=
\frac{i\,\epsilon^N}{\lambda-\lambda_1}
\Biggl\{
c\cdot i^{N-1}
\prod_{j=3}^\ell \frac{\lambda_j+\frac{i}{2}}{\lambda_j-\frac{i}{2}}-
2
\prod_{j=3}^\ell \frac{\lambda_j-\frac{3i}{2}}{\lambda_j-\frac{i}{2}}
\Biggr\}.
\end{align*}
Therefore if we take $c$ as in (\ref{eq:c1_NepomechieWang}),
we see that $\Lambda_1=0$.

(b) We have
\begin{align*}
\Lambda_2&=
\frac{i}{\lambda-\lambda_2}
\Biggl\{
\left(
\lambda_2+\frac{i}{2}
\right)^N
\prod_{j=1\atop j\neq 2}^\ell \frac{\lambda_2-\lambda_j-i}{\lambda_2-\lambda_j}-
\left(
\lambda_2-\frac{i}{2}
\right)^N
\prod_{j=1\atop j\neq 2}^\ell \frac{\lambda_j-\lambda_2-i}{\lambda_j-\lambda_2}
\Biggr\}\\
&=
\frac{i}{\lambda-\lambda_2}
\Biggl\{
\epsilon^N\cdot\frac{-2i}{-i}
\prod_{j=3}^\ell \frac{-\frac{i}{2}-\lambda_j-i}{-\frac{i}{2}-\lambda_j}-
(-i)^N\,\frac{c\,\epsilon^N}{i}
\prod_{j=3}^\ell \frac{\lambda_j+\frac{i}{2}-i}{\lambda_j+\frac{i}{2}}
\Biggr\}\\
&=
\frac{i\,\epsilon^N}{\lambda-\lambda_2}
\Biggl\{
2\prod_{j=3}^\ell \frac{\lambda_j+\frac{3i}{2}}{\lambda_j+\frac{i}{2}}-
\frac{c}{i^{N+1}}
\prod_{j=3}^\ell \frac{\lambda_j-\frac{i}{2}}{\lambda_j+\frac{i}{2}}
\Biggr\}.
\end{align*}
Therefore if we take $c$ as in (\ref{eq:c2_NepomechieWang}),
we see that $\Lambda_2=0$.
\end{proof}

Now we can prove the theorem of Nepomechie--Wang
\cite{NepomechieWang2013}.

\begin{theorem}
Suppose that the singular solution $\lambda=\{\lambda_1=\frac{i}{2},
\lambda_2=-\frac{i}{2},\lambda_3,\ldots,\lambda_\ell\}$ satisfies the condition
\begin{align}
\label{eq:NepomechieWangCondition}
\left(
-\prod^\ell_{j=3}\frac{\lambda_j+\frac{i}{2}}{\lambda_j-\frac{i}{2}}
\right)^N=1.
\end{align}
Then the vector $\lim_{\epsilon\rightarrow 0}\Psi^{(\epsilon)}_\lambda$ is a
well-defined eigenvector of the Hamiltonian $\mathcal{H}_N$.
\end{theorem}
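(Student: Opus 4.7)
The approach is to apply the standard transfer matrix action
\begin{align*}
\tau_N(\lambda)\prod_{j=1}^\ell B_N(\lambda_j)|0\rangle_N &= \Lambda(\lambda;\lambda_1,\ldots,\lambda_\ell)\prod_{j=1}^\ell B_N(\lambda_j)|0\rangle_N\\
&\quad + \sum_{k=1}^\ell \Lambda_k(\lambda;\lambda_1,\ldots,\lambda_\ell)\,B_N(\lambda)\prod_{j\neq k}B_N(\lambda_j)|0\rangle_N,
\end{align*}
with the regularized parameters $\lambda_1 = i/2 + \epsilon + c\,\epsilon^N$, $\lambda_2 = -i/2 + \epsilon$, divide both sides by $\epsilon^N$, and show that each of the $\ell$ unwanted terms on the right hand side vanishes in the limit $\epsilon\to 0$. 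Since the previous lemma has already established that $\lim_{\epsilon\to 0}\Psi^{(\epsilon)}_\lambda$ is well-defined and since the wanted coefficient $\Lambda$ has an obvious finite limit, this will yield $\tau_N(\lambda)\lim_\epsilon\Psi^{(\epsilon)}_\lambda = \Lambda(\lambda;i/2,-i/2,\lambda_3,\ldots,\lambda_\ell)\lim_\epsilon\Psi^{(\epsilon)}_\lambda$, and the eigenvalue equation for $\mathcal{H}_N$ then follows from (\ref{eq:tau_hamiltonian}).

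The crux lies in controlling the $k = 1$ and $k = 2$ terms simultaneously. Lemma \ref{lem:asymptotics} supplies two distinct formulas (\ref{eq:c1_NepomechieWang}) and (\ref{eq:c2_NepomechieWang}) for $c$, each canceling the leading $\epsilon^N$-behavior of $\Lambda_1$ or $\Lambda_2$ individually but not obviously both. The key calculation is to verify that the hypothesis (\ref{eq:NepomechieWangCondition}) is exactly the condition under which these two values of $c$ coincide: take the product over $k = 3,\ldots,\ell$ of the singular Bethe ansatz equation for $\lambda_k$, observe that the resulting double product $\prod_k\prod_{j\geq 3,\,j\neq k}(\lambda_k-\lambda_j+i)/(\lambda_k-\lambda_j-i)$ telescopes to $1$ by pairwise cancellation of the $(j,k)$ and $(k,j)$ factors, and then equating (\ref{eq:c1_NepomechieWang}) with (\ref{eq:c2_NepomechieWang}) reduces algebraically to exactly (\ref{eq:NepomechieWangCondition}). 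Consequently, with $c$ chosen as in (\ref{eq:c2_NepomechieWang}), both $\Lambda_1$ and $\Lambda_2$ are $O(\epsilon^{N+1})$, and the $k=1,2$ unwanted terms (whose remaining $B_N$-factors are manifestly of order $1$) tend to zero after division by $\epsilon^N$.

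For $k\geq 3$ the singular Bethe ansatz equation for $\lambda_k$ — in which the factors involving $\lambda_1,\lambda_2$ collapse to $\frac{(\lambda_k+i/2)(\lambda_k+3i/2)}{(\lambda_k-i/2)(\lambda_k-3i/2)}$ — forces $\Lambda_k$ to vanish at $\epsilon=0$, so $\Lambda_k = O(\epsilon)$ by analytic dependence on $\epsilon$. Using the commutativity (\ref{eq:Bethe_relations1}) of the $B_N$'s one rewrites $B_N(\lambda)\prod_{j\neq k}B_N(\lambda_j)|0\rangle_N = [B_N(\lambda)\prod_{j\geq 3,\,j\neq k}B_N(\lambda_j)]\cdot B_N(\lambda_1)B_N(\lambda_2)|0\rangle_N$ and applies the previous lemma directly to the innermost factor to obtain an $O(\epsilon^N)$ bound; together with $\Lambda_k/\epsilon^N = O(\epsilon^{1-N})$ this gives an $O(\epsilon)$ contribution that vanishes in the limit. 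The main subtlety of the proof is therefore the coordinated cancellation in the $k=1,2$ sector — recognizing that (\ref{eq:NepomechieWangCondition}) is precisely the compatibility condition between the two $c$-formulas of Lemma \ref{lem:asymptotics} is where the Nepomechie--Wang criterion acquires its operational meaning.
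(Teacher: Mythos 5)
Your proposal is correct and follows essentially the same route as the paper: the heart of the argument in both cases is that the Nepomechie--Wang condition (\ref{eq:NepomechieWangCondition}) is exactly the compatibility condition between the two values of $c$ in Lemma \ref{lem:asymptotics}, obtained by equating (\ref{eq:c1_NepomechieWang}) with (\ref{eq:c2_NepomechieWang}) and using the product over $k\geq 3$ of the singular Bethe ansatz equations, whose double product on the right telescopes to $1$. Your handling of the $k\geq 3$ unwanted terms (via $\Lambda_k=O(\epsilon)$ together with the $O(\epsilon^N)$ bound from the well-definedness lemma) merely spells out what the paper dismisses as ``standard arguments,'' so there is no substantive difference.
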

\begin{proof}
The remaining thing to check is that the compatibility condition of
(\ref{eq:c1_NepomechieWang}) and (\ref{eq:c2_NepomechieWang})
is the condition (\ref{eq:NepomechieWangCondition}).
By equating the right hand sides of (\ref{eq:c1_NepomechieWang})
and (\ref{eq:c2_NepomechieWang}), we obtain
\[
\prod_{j=3}^\ell\frac{\lambda_j-\frac{i}{2}}{\lambda_j+\frac{i}{2}}\cdot
\frac{\lambda_j-\frac{3i}{2}}{\lambda_j+\frac{3i}{2}}
=(-1)^N.
\]
On the other hand, if we take product of the Bethe ansatz equations,
we obtain
\[
\prod_{j=3}^\ell
\left(\frac{\lambda_k+\frac{i}{2}}{\lambda_k-\frac{i}{2}}\right)^{N-1}
\left(\frac{\lambda_k-\frac{3i}{2}}{\lambda_k+\frac{3i}{2}}\right)
=\prod_{k,j=3 \atop k\neq j}^{\ell}
\frac{\lambda_k-\lambda_j+i}{\lambda_k-\lambda_j-i}=1.
\]
Combining two expressions, we obtain (\ref{eq:NepomechieWangCondition}).
\end{proof}

Finally we have to compute the energy eigenvalue for
the state $\lim_{\epsilon\rightarrow 0}\Psi^{(\epsilon)}_\lambda$.

\begin{proposition}[\cite{KirillovSakamoto2014b}]
For the singular solution $\lambda=\{\lambda_1=\frac{i}{2},
\lambda_2=-\frac{i}{2},\lambda_3,\ldots,\lambda_\ell\}$,
the energy eigenvalue of the corresponding vector
$\lim_{\epsilon\rightarrow 0}\Psi^{(\epsilon)}_\lambda$ is
\[
\mathcal{E}=-J-\frac{J}{2}\sum_{j=3}^\ell\frac{1}{\lambda_j^2+\frac{1}{4}}.
\]
\end{proposition}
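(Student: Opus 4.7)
The plan is to exploit the fundamental identity $\mathcal{H}_N = \frac{iJ}{2}\frac{d}{d\mu}\log\tau_N(\mu)\bigr|_{\mu=i/2} - \frac{NJ}{2}\mathbb{I}_N$ together with the fact (established by the preceding theorem combined with the commutativity $[\tau_N(\mu),\tau_N(\mu')]=0$) that $\lim_{\epsilon\to 0}\Psi^{(\epsilon)}_\lambda$ is a simultaneous eigenvector of the whole family $\{\tau_N(\mu)\}_\mu$. Denoting by $\Lambda(\mu)$ its eigenvalue under $\tau_N(\mu)$, the problem reduces to computing $\frac{d}{d\mu}\log\Lambda(\mu)\bigr|_{\mu=i/2}$ and substituting into the above formula.

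The natural candidate for $\Lambda(\mu)$ is the $\epsilon\to 0$ limit of the standard Bethe eigenvalue
\[
\Lambda(\mu;\lambda_1^{(\epsilon)},\ldots,\lambda_\ell) = (\mu+\tfrac{i}{2})^N\!\prod_{j=1}^\ell \frac{\mu-\lambda_j-i}{\mu-\lambda_j} + (\mu-\tfrac{i}{2})^N\!\prod_{j=1}^\ell \frac{\lambda_j-\mu-i}{\lambda_j-\mu},
\]
with $\lambda_1^{(\epsilon)}=\tfrac{i}{2}+\epsilon+c\epsilon^N$, $\lambda_2^{(\epsilon)}=-\tfrac{i}{2}+\epsilon$. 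The apparent poles at $\mu=\pm i/2$ cancel against adjacent numerators, and a direct computation gives
\[
\Lambda(\mu) = (\mu+\tfrac{i}{2})^{N-1}(\mu-\tfrac{3i}{2})\prod_{j=3}^\ell\frac{\mu-\lambda_j-i}{\mu-\lambda_j} + (\mu-\tfrac{i}{2})^{N-1}(\mu+\tfrac{3i}{2})\prod_{j=3}^\ell\frac{\lambda_j-\mu-i}{\lambda_j-\mu}.
\]
For $N\geq 3$ the second summand is divisible by $(\mu-i/2)^{N-1}$ and so contributes nothing to $\Lambda(i/2)$ or $\Lambda'(i/2)$; the logarithmic derivative therefore reduces to that of the first summand, which decomposes into elementary pieces:
\[
\frac{d}{d\mu}\log\Lambda(\mu)\Big|_{\mu=i/2} = \frac{N-1}{i} - \frac{1}{i} + \sum_{j=3}^\ell\frac{i}{\lambda_j^2+1/4} = -i(N-2) + i\sum_{j=3}^\ell\frac{1}{\lambda_j^2+1/4}.
\]
Multiplying by $iJ/2$ and subtracting $NJ/2$ yields exactly the claimed eigenvalue, with $(N-2)J/2$ combining with $-NJ/2$ to produce the additional $-J$ that distinguishes singular from regular states.

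The main obstacle is justifying the identification of $\lim_{\epsilon\to 0}\Lambda(\mu;\lambda^{(\epsilon)})$ with the eigenvalue of $\tau_N(\mu)$ on $\lim_{\epsilon\to 0}\Psi^{(\epsilon)}_\lambda$. For finite $\epsilon$ the regularized state fails to satisfy the Bethe equations, so applying $\tau_N(\mu)$ yields unwanted terms $\frac{1}{\epsilon^N}\sum_{k=1}^\ell \Lambda_k(\mu;\lambda^{(\epsilon)})\,B_N(\mu)\prod_{j\neq k}B_N(\lambda_j^{(\epsilon)})|0\rangle_N$. For $k=1,2$ these vanish identically once the Nepomechie--Wang compatibility condition is imposed and $c$ is fixed, by Lemma \ref{lem:asymptotics}. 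For $k\geq 3$ one uses that $\Lambda_k\to 0$ as $\epsilon\to 0$ (a direct consequence of the singular Bethe equation at $\lambda_k$) and commutes $B_N(\lambda_1^{(\epsilon)})B_N(\lambda_2^{(\epsilon)})$ to the right so that it acts on $|0\rangle_N$ first, absorbing the $1/\epsilon^N$ factor via the scaling in the proof of the preceding Lemma. Once these convergence details are settled, the remainder of the proof is the residue computation displayed above.
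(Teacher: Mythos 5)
Your proof is correct, and it reaches the result by a genuinely different computation from the paper's. The paper keeps the regularization $\lambda_1=\frac{i}{2}+\epsilon+c\,\epsilon^N$, $\lambda_2=-\frac{i}{2}+\epsilon$ inside every factor of $\Lambda$ and its $\mu$-derivative, decomposes the numerator as $A_0+\sum_{j}A_j$, and evaluates each limit separately; the only delicate step there is the combination $A_1+A_2$, where two terms individually divergent like $1/\epsilon$ cancel to give the contribution $-2$. You instead cancel the singular factors algebraically in $\Lambda(\mu)$ itself before differentiating, arriving at the manifestly regular closed form $(\mu+\frac{i}{2})^{N-1}(\mu-\frac{3i}{2})\prod_{j\ge 3}\frac{\mu-\lambda_j-i}{\mu-\lambda_j}+(\mu-\frac{i}{2})^{N-1}(\mu+\frac{3i}{2})\prod_{j\ge 3}\frac{\lambda_j-\mu-i}{\lambda_j-\mu}$ (which I have checked), after which the logarithmic derivative at $\mu=\frac{i}{2}$ is an elementary partial-fractions computation with no limits to take; your constants match the paper's ($N-1$ and $-1$ in place of the paper's $N$ and $-2$, consistently with the shifted bookkeeping). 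You also buy something the paper leaves implicit: you explicitly address why the eigenvalue of $\tau_N(\mu)$ on $\lim_{\epsilon\to0}\Psi^{(\epsilon)}_\lambda$ equals $\lim_{\epsilon\to0}\Lambda(\mu;\lambda^{(\epsilon)})$, by showing the unwanted terms $\epsilon^{-N}\Lambda_k B_N(\mu)\prod_{j\neq k}B_N(\lambda_j)|0\rangle_N$ all vanish in the limit. One small imprecision there: for $k=1,2$ the choice of $c$ does not make $\Lambda_k$ vanish \emph{identically} for finite $\epsilon$ --- the computation in Lemma \ref{lem:asymptotics} keeps only the leading order in $\epsilon$, so what it actually shows is $\Lambda_k=o(\epsilon^N)$; this is still exactly what is needed to kill the $\epsilon^{-N}$ prefactor, since $\prod_{j\neq k}B_N(\lambda_j)|0\rangle_N$ stays bounded, so the argument goes through unchanged.
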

\begin{proof}
Let $\varepsilon=\frac{2}{J}\mathcal{E}+N$.
Then
\begin{align}
\varepsilon=i\frac{d}{d\lambda}\log\Lambda\biggr|_{\lambda=\frac{i}{2}}
=\frac{i\frac{d\Lambda}{d\lambda}}{\Lambda}\biggr|_{\lambda=\frac{i}{2}},
\end{align}
where
\begin{align}
\Lambda(\lambda;\lambda_1,\cdots,\lambda_\ell)=
\left(
\lambda+\frac{i}{2}
\right)^N
\prod_{j=1}^\ell \frac{\lambda-\lambda_j-i}{\lambda-\lambda_j}+
\left(
\lambda-\frac{i}{2}
\right)^N
\prod_{j=1}^\ell \frac{\lambda_j-\lambda-i}{\lambda_j-\lambda}.
\end{align}
See (\ref{eq:tau_hamiltonian}).
On this expression, we apply the regularization
\[
\lambda_1=\frac{i}{2}+\epsilon+c\,\epsilon^N,\qquad
\lambda_2=-\frac{i}{2}+\epsilon.
\]

The denominator of $\varepsilon$ is
\begin{align*}
\varepsilon_{\rm deno}:=&\,
\Lambda\left(\frac{i}{2};\lambda_1,\cdots,\lambda_\ell\right)=
i^N\prod_{j=1}^\ell\frac{\lambda_j+\frac{i}{2}}{\lambda_j-\frac{i}{2}}\\
=&\,i^N\cdot
\frac{i+\epsilon+c\,\epsilon^N}{\epsilon+c\,\epsilon^N}\cdot
\frac{\epsilon}{\epsilon-i}\,
\prod_{j=3}^\ell\frac{\lambda_j+\frac{i}{2}}{\lambda_j-\frac{i}{2}}
=i^N\cdot
\frac{i+\epsilon+c\,\epsilon^N}{(1+c\,\epsilon^{N-1})(\epsilon-i)}\,
\prod_{j=3}^\ell\frac{\lambda_j+\frac{i}{2}}{\lambda_j-\frac{i}{2}}.
\end{align*}

On the other hand, the numerator is
\begin{align}
\nonumber
i\frac{d\Lambda}{d\lambda}=&\,
A_0(\lambda)+
\sum^\ell_{j=1}A_j(\lambda)+\text{terms containing at least one}\left(\lambda-\frac{i}{2}\right)
\end{align}
where
\begin{align*}
A_0(\lambda)=iN\!\left(\lambda+\frac{i}{2}\right)^{N-1}
\prod_{j=1}^\ell\frac{\lambda-\lambda_j-i}{\lambda-\lambda_j}
\end{align*}
and for $j=1,2,\ldots,\ell$,
\begin{align*}
A_j(\lambda)=
i\!\left(\lambda+\frac{i}{2}\right)^{N}
\frac{\lambda-\lambda_1-i}{\lambda-\lambda_1}\cdots
\frac{\lambda-\lambda_{j-1}-i}{\lambda-\lambda_{j-1}}\cdot
\frac{i}{(\lambda_j-\lambda)^2}\cdot
\frac{\lambda-\lambda_{j+1}-i}{\lambda-\lambda_{j+1}}\cdots
\frac{\lambda-\lambda_\ell-i}{\lambda-\lambda_\ell}.
\end{align*}
Below we compute the contribution from each term one by one.

$\bullet$ Let us consider $A_0(\lambda)$:
\begin{align*}
A_0\!\left(\frac{i}{2}\right)&=
i^NN\cdot
\frac{\frac{i}{2}-(\frac{i}{2}+\epsilon+c\,\epsilon^N)-i}{\frac{i}{2}-(\frac{i}{2}+\epsilon+c\,\epsilon^N)}\cdot
\frac{\frac{i}{2}-(-\frac{i}{2}+\epsilon)-i}{\frac{i}{2}-(-\frac{i}{2}+\epsilon)}\,
\prod_{j=3}^\ell\frac{\lambda_j+\frac{i}{2}}{\lambda_j-\frac{i}{2}}\\
&=i^NN\cdot
\frac{i+\epsilon+c\,\epsilon^N}{(1+c\,\epsilon^{N-1})(\epsilon-i)}\,
\prod_{j=3}^\ell\frac{\lambda_j+\frac{i}{2}}{\lambda_j-\frac{i}{2}}.
\end{align*}
Therefore we obtain
\begin{align*}
\frac{1}{\varepsilon_{\rm deno}}\cdot
A_0\!\left(\frac{i}{2}\right)=N.
\end{align*}

$\bullet$ Let us consider $A_1(\lambda)$ and $A_2(\lambda)$.
\begin{align*}
A_1\!\left(\frac{i}{2}\right)&=
i^{N+1}
\frac{i}{\{\frac{i}{2}-(\frac{i}{2}+\epsilon+c\,\epsilon^N)\}^2}\cdot
\frac{\frac{i}{2}-(-\frac{i}{2}+\epsilon)-i}{\frac{i}{2}-(-\frac{i}{2}+\epsilon)}\,
\prod_{j=3}^\ell\frac{\lambda_j+\frac{i}{2}}{\lambda_j-\frac{i}{2}}\\
&=-i^{N}
\frac{1}{\epsilon\,(1+c\,\epsilon^{N-1})^2(\epsilon-i)}\,
\prod_{j=3}^\ell\frac{\lambda_j+\frac{i}{2}}{\lambda_j-\frac{i}{2}}.
\end{align*}
On the other hand, we have
\begin{align*}
A_2\!\left(\frac{i}{2}\right)&=
i^{N+1}\,
\frac{\frac{i}{2}-(\frac{i}{2}+\epsilon+c\,\epsilon^N)-i}{\frac{i}{2}-(\frac{i}{2}+\epsilon+c\,\epsilon^N)}\cdot
\frac{i}{\{\frac{i}{2}-(-\frac{i}{2}+\epsilon)\}^2}\,
\prod_{j=3}^\ell\frac{\lambda_j+\frac{i}{2}}{\lambda_j-\frac{i}{2}}\\
&=-i^{N}
\frac{i+\epsilon+c\,\epsilon^N}{\epsilon\,(1+c\,\epsilon^{N-1})(\epsilon-i)^2}\,
\prod_{j=3}^\ell\frac{\lambda_j+\frac{i}{2}}{\lambda_j-\frac{i}{2}}.
\end{align*}
Thus we have
\begin{align*}
&\lim_{\epsilon\rightarrow 0}\,
\frac{1}{\varepsilon_{\rm deno}}
\left\{
A_1\!\left(\frac{i}{2}\right)+A_2\!\left(\frac{i}{2}\right)
\right\}\\
=&\lim_{\epsilon\rightarrow 0}\,
\frac{1}{\varepsilon_{\rm deno}}\times
\left(-i^{N}\right)
\frac{(\epsilon-i)+(i+\epsilon+c\,\epsilon^N)(1+c\,\epsilon^{N-1})}
{\epsilon\,(1+c\,\epsilon^{N-1})^2(\epsilon-i)^2}\,
\prod_{j=3}^\ell\frac{\lambda_j+\frac{i}{2}}{\lambda_j-\frac{i}{2}}\\
=&-\lim_{\epsilon\rightarrow 0}\,
\frac{(1+c\,\epsilon^{N-1})(\epsilon-i)}{i+\epsilon+c\,\epsilon^N}\times
\frac{2\epsilon+i\,c\,\epsilon^{N-1}+2c\,\epsilon^N+c^2\epsilon^{2N-1}}
{\epsilon\,(1+c\,\epsilon^{N-1})^2(\epsilon-i)^2}
\\
=&-\lim_{\epsilon\rightarrow 0}\,
\frac{2\epsilon+i\,c\,\epsilon^{N-1}+2c\,\epsilon^N+c^2\epsilon^{2N-1}}
{\epsilon\,(1+c\,\epsilon^{N-1})(\epsilon-i)(i+\epsilon+c\,\epsilon^N)}
=-2.
\end{align*}

$\bullet$ Finally, for $j=3,4,\ldots,\ell$, we have
\begin{align*}
A_j\!\left(\frac{i}{2}\right)=&\,
i^{N+1}\,\frac{i+\epsilon+c\,\epsilon^N}{\epsilon+c\,\epsilon^N}\cdot
\frac{\epsilon}{\epsilon-i}\cdot
\frac{\lambda_3+\frac{i}{2}}{\lambda_3-\frac{i}{2}}\cdots
\frac{\lambda_{j-1}+\frac{i}{2}}{\lambda_{j-1}-\frac{i}{2}}\cdot
\frac{i}{(\lambda_j-\frac{i}{2})^2}\cdot
\frac{\lambda_{j+1}+\frac{i}{2}}{\lambda_{j+1}-\frac{i}{2}}\cdots
\frac{\lambda_\ell+\frac{i}{2}}{\lambda_\ell-\frac{i}{2}}.
\end{align*}
Thus we have
\begin{align*}
\frac{1}{\varepsilon_{\rm deno}}\cdot
A_j\!\left(\frac{i}{2}\right)=
-\,\frac{\lambda_j-\frac{i}{2}}{\lambda_j+\frac{i}{2}}\cdot
\frac{1}{(\lambda_j-\frac{i}{2})^2}=
-\,\frac{1}{\lambda_j^2+\frac{1}{4}}.
\end{align*}

To summarize, we have
\begin{align*}
\varepsilon=N-2-\sum_{j=3}^\ell\frac{1}{\lambda_j^2+\frac{1}{4}}.
\end{align*}
Therefore we obtain the result.
\end{proof}

\section{Update for our previous paper}
\label{sec:update}
Here we would like to provide an update for our previous paper \cite{KiSa14}.
In $N=12$, $\ell=5$ case, we have the following solution:
\begin{align*}
\lambda_1&=0\\
\lambda_2&=0.178978221719006005297692789210\cdots\\
\lambda_3&=-0.178978221719006005297692789210\cdots\\
\lambda_4&=i/2\\
\lambda_5&=-i/2
\end{align*}
Here, $\xi=\lambda_2^2=\lambda_3^2$
is the smallest positive real solution of the equation
\begin{align}
\label{eq:N12M5}
5120 \xi ^5+11520 \xi ^4-4992 \xi ^3-9312 \xi ^2+2020 \xi -55=0.
\end{align}
This solution corresponds to the following rigged configuration:
\begin{center}
\unitlength 12pt
\begin{picture}(4,4)
\put(0,3.1){2}
\put(0,2.1){4}
\put(0,1.1){4}
\put(0,0.1){4}
\put(0.8,0){\Yboxdim12pt\yng(2,1,1,1)}
\put(3.0,3.1){1}
\put(2.1,2.1){2}
\put(2.1,1.1){2}
\put(2.1,0.1){2}
\end{picture}
\end{center}
In ``tableN12M5.pdf" of \cite{HNS1}, this solution appears as \#235
which is counted as a regular solution.
Based on this result, we noted in \cite[Conjecture 14(C)]{KiSa14}
such that we need one more restriction on the riggings when $\ell$ is odd.
However, as Deguchi--Giri \cite{DG} pointed out, the above solution is singular.
Therefore we come to the conclusion that we no longer need restriction \cite[Eq.(26)]{KiSa14}
when $\ell$ is odd.

We remark that equation (\ref{eq:N12M5}) admits two more positive real solutions and
both of them provide physical singular solutions.
In the rigged configurations language they are $\{(2,1),(1,3),$ $(1,2),(1,1)\}$ for the smaller solution
and $\{(2,1),(1,4),(1,2),(1,0)\}$ for the larger solution.
Similarly the equation (\ref{eq:N12M5}) admits two negative real solutions
which provide physical singular solutions.
In the rigged configurations language, they are $\{(4,1),(1,4)\}$ for the smaller solution
and $\{(3,1),(2,2)\}$ for the larger solution
Thus the five roots of equation (\ref{eq:N12M5}) determine all the physical singular solutions
for the case $N=12$ and $\ell=5$.

\paragraph{Acknowledgments:}
I would like to thank Prof. Anatol N. Kirillov for the collaboration in the related works
\cite{KiSa14,KirillovSakamoto2014b} and kind interest in the present paper.

\end{document}